\numberwithin{equation}{section}
\numberwithin{table}{section}
\DeclareMathOperator{\rk}{rk}
\DeclareMathOperator{\Ker}{Ker}
\DeclareMathOperator{\enc}{enc}
\DeclareMathOperator{\ord}{ord}
\DeclareMathOperator{\height}{ht}
\theoremstyle{definition}
\newtheorem{definition}{Definition}[section]
\newtheorem{example}[definition]{Example}
\newtheorem*{remark*}{Remark}
\theoremstyle{plain}
\newtheorem{theorem}[definition]{Theorem}
\newtheorem{corollary}[definition]{Corollary}
\newtheorem{lemma}[definition]{Lemma}
\newtheorem{proposition}[definition]{Proposition}
\newcommand{\vF}{{ \mathbb F }}
\title{Optimal Rank-Metric Codes with Rank-Locality from Drinfeld Modules}
\author[L. Bastioni]{Luca Bastioni}
\address{University of South Florida \\ 4202 E Fowler Ave\\
33620 Tampa, US.}
\email{lbastioni@usf.edu}
\author[M. O. Darwish]{Mohamed O. Darwish}
\address{University of South Florida \\ 4202 E Fowler Ave\\
33620 Tampa, US.}
\email{moh13@usf.edu}
\author[G. Micheli]{Giacomo Micheli}
\address{University of South Florida\\ 4202 E Fowler Ave\\
33620 Tampa, US.}
\email{gmicheli@usf.edu}
\keywords{}
\subjclass[2020]{11T06,11T71,94B27}
\begin{document}

\maketitle


\begin{abstract}
We introduce a new technique to construct rank-metric codes using the arithmetic theory of Drinfeld modules, and Dirichlet Theorem on polynomial arithmetic progressions. Using our methods, we obtain a new infinite family of optimal rank-metric codes with rank-locality, i.e.\ every code in our family achieves the information theoretical bound for rank-metric codes with rank-locality.
\end{abstract}

\section{Introduction}
Let $q$ be a prime power and $m$ a positive integer.
Rank-metric codes are an important family of error correcting codes that are used for reliable communication over noisy channels \cite{alfarano2022linear,bartz2022rank,gorla2021rank,neri2022twisted,silva2008rank} and for cryptography \cite{couvreur2020hardness,gaborit2014new,samardjiska2019reaction}. They have been widely studied in the literature due to their interesting algebraic combinatorial properties that allow to encode information in a way that the obtained codewords are resilient to rank errors, i.e. the noise is modeled as a sum of a low rank matrix.

Locality is a foundational concept in information theory \cite{silberstein2013optimal,tamo2014family,tamo2016bounds}. It allows for efficient recovery of encoded information that is partially lost in distributed storage systems. More specifically, if the information vector is encoded and distributed over multiple servers, locality allows to recover the information contained in a single server (that might be unavailable for various reasons) using only a few other servers.

Rank-metric codes with rank-locality belong to an important subclass of rank-metric codes that allow efficient corrections of criss-cross failures in data centers (for additional detail on this see for example \cite[Example 1]{rankmetriccodes}).
Various constructions with flexible parameters are available for locally recoverable codes over the Hamming metric \cite{barg2017locally,barg2017locally2,chen2022function,dukes2023optimal,
garrison2023class,liu2020constructions,mesnager2019good,micheli2019constructions} but only one for the rank-metric codes \cite{rankmetriccodes}, which allows only very restrictive parameters (see for example the divisibility conditions in \cite[Construction 1]{rankmetriccodes}).
The difficulty that arises in the rank-metric context is that classical constructions fail due to both the non-commutative nature of the message space (which naturally consists of $q$-linearized polynomials with coefficients in $\vF_{q^m}$), and the fact that one can control certain splitting conditions only for special linearized polynomials. For example, in \cite[Construction 1]{rankmetriccodes}, the kernels of the linearized polynomials used are orthogonal subspaces whose direct sum is a field, which gives restrictive divisibility conditions.

The purpose of this paper is to employ the theory of Drinfeld Modules to construct new rank metric codes with locality having parameters that are not available using other constructions.
Our method employs a blend of supersingular Drinfeld Modules, Anderson motive decompositions, Dirichlet Theorem, and of course rank metric codes theory. Notice that differently from \cite{silberstein2013optimal} (where the authors get classical locality, as explained in \cite[Section 3]{rankmetriccodes}), we actually get proper rank-locality, as required in \cite{rankmetriccodes}. 

We structure the paper in three parts. The first one, Section \ref{sec:prel}, is merely introductory and provides the necessary background on Drinfeld Modules and Rank Metric Codes.
The second one, consisting of Sections \ref{sec:genconstr} and \ref{sec:TwoCases}, provides the algebraic core of the construction. Then, in the third part we show that the hypothesis of our theorem can indeed be easily verified. In order to see this, we employ Dirichlet Theorem for polynomial arithmetic progressions to show the existence of an infinite family of our codes (Section \ref{sec:infinfam}), and provide two examples to show that the parameters can be kept small (Section \ref{sec: examples}).

\section{Preliminaries}\label{sec:prel}

\subsection{Drinfeld Modules}\label{sec: DM}
In this subsection, we provide some background on the theory of Drinfeld modules. For a comprehensive treatise of the subject, the reader should refer to \cite{Papikian2023}. Throughout this section we assume $K\supseteq \vF_q$ is a field and $a\in \vF_q[T]$.
Throughout the article, let $q$ be a prime power and $\vF_q$ the finite field with $q$ elements. We denote by $A=\vF_q[T]$ the ring of polynomials over $\vF_q$.

\begin{definition}
Let $\vF_{q^n}$ be a finite extension of $\vF_q$. We define the norm 
\begin{align*}
  \operatorname{Nr}: \vF_{q^n}&\longrightarrow    \vF_{q^n}  \\
     \alpha & \longmapsto \prod_{i=0}^{n-1} \alpha^{q^i} \\
\end{align*}
\end{definition}

\begin{definition}
 Let $P\in \vF_q[T]$ be an irreducible polynomial and $\mathfrak{p}=(P)$ be the ideal generated by $P$. We say $\vF_q[T]/\mathfrak{p}$ is the \textit{residue field} of $\mathfrak{p}$ and we shall denote it by $\vF_\mathfrak{p}$.      
\end{definition}

Slightly abusing the notation, we also denote by $\mathfrak p$ the monic generator of $\mathfrak p=(P)$.

\begin{definition}
Let $n\geq 0$. A polynomial $f(x)=\sum_{i=0}^na_ix^{q^i}$ in $K[x]$ is called a $q$-\emph{linearized} polynomial. If $\operatorname{deg}(f)=q^n$, we say that $f$ has $q$-degree equal to $n$ and we denote $q^i$ by $[i]$. The ring of $q$-linearized polynomials is denoted by $K\langle x \rangle$, and in general, it is a noncommutative ring under addition and composition.      
\end{definition}

\begin{definition}
Let $K$ be a commutative $\vF_q$-algebra and $\tau$ be an indeterminate.  Let $K\{\tau\}$ be the set of polynomials of the form $\sum_{i=0}^na_i\tau^i$ such that $a_i\in K$ and $n\geq 0$. In $K\{\tau\}$, for $\{a_i,b_i\}_{i=0,\ldots,n}\in K$, we define the addition as the usual addition of polynomials
$$\sum_{i=0}^na_i\tau^i+\sum_{i=0}^nb_i\tau^i=\sum_{i=0}^n(a_i+b_i)\tau^i$$ while the multiplication is defined as follows for any $a,b\in K$
\begin{equation}\label{twistedmultiplication}
(a\tau^i)(b\tau^j):=ab^{q^i}\tau^{i+j}.    
\end{equation}
With these operations $K\{\tau\}$ is a noncommutative ring called the ring of \textit{twisted} polynomials.
\end{definition}
\begin{definition}
Let $f = a_h\tau^h + a_{h+1}\tau^{h+1} \dots + a_n\tau^n \in K\{\tau\}$ such that $0 \leq h \leq n$ with $a_h \neq 0, a_n \neq 0$. The \emph{height} of $f$ is $\height(f) = h$, and the $\tau$-\emph{degree} (or simply the \emph{degree}) of $f$ is $\deg(f)\coloneqq\deg_\tau(f) = n$. We also say that $f$ is \emph{separable} if $\height(f)=0$ and \emph{inseparable} otherwise.
\end{definition}

\begin{definition}\label{drinfeldmodule}
Let $K$ be an $\vF_q[T]$-field, (i.e. a field equipped with an $\vF_q$-algebra homomorphism $\gamma:\vF_q[T]\rightarrow K$).
A Drinfeld module of rank $R\geq 1$ is an $\vF_q$-algebra homomorphism 
\begin{align*}
\phi:\vF_q[T]\longrightarrow &K\{\tau\}\\
a\longmapsto &\phi_a=\gamma(a)+\sum^n_{i=1}g_i(a)\tau^i
\end{align*}
where $g_n(a)\neq 0$ and $n=\operatorname{deg}_T(a)\cdot R$. The morphism $\gamma$ is called the \emph{structure morphism} of $\phi$, and $\mathfrak{p} \coloneqq \ker(\gamma)$ is called the $A$-\emph{characteristic} of $K$ and denoted by $\operatorname{char}_A(K)$.
\end{definition}

For what follows, it is important to observe that $K$ could have different structures of $A$-field, namely it could be associated to different $A$-characteristics, depending on the choice of $\gamma$ and $\phi$.
Observe also that for any  $a=\sum_{i=0}^na_iT^i$, we have $\phi_a=\sum_{i=0}^na_i\phi_T^i$. In other words, $\phi_T$ determines $\phi$ entirely.
Moreover, $K\{\tau\}$ is related to the ring of $q$-linearized polynomials $K\langle x \rangle$ via the following isomorphism 
 \begin{align*}
\iota: K\{\tau \}&\longrightarrow K\langle x \rangle\\ 
a_i\tau^i&\longmapsto a_ix^{q^i}.
 \end{align*}      

\begin{definition}
$\iota(\phi_a)$ is called the $a$-\emph{division polynomial} of $\phi$ and denoted by $\phi_a(x)$.     
\end{definition}

Among all Drinfeld modules, there is a particular one that plays a special role in our paper. Therefore we introduce also the following.
\begin{definition}\cite[Definition 3.2.5]{Papikian2023}
The \textit{Carlitz module} $C$ is the Drinfeld module of rank $1$ defined by $C_T=\gamma (T)+\tau$.
\end{definition}

It is a well-known fact in finite field theory that a $q$-linearized polynomial in $K\langle x \rangle$ is also an $\vF_q$-linear operator on $K$ (see \cite[Chapter 3.1]{Papikian2023}). Therefore, one can talk about $\Ker\left(\phi_a(x)\right)\subseteq \overline K$.

\begin{definition}
An element $\alpha \in \operatorname{Ker} \left( \phi_a(x) \right) \subseteq \overline{K}$ is called an \emph{$a$-torsion point}. The set of all $a$-torsion points is denoted by $\phi[a]$. In addition, we denote by $K(\phi[a])$ the splitting field of $\phi_a(x)$ over $K$ and refer to it as the \emph{$a$-division field} of $\phi$.
\end{definition}

\begin{proposition}\label{prop:height}\cite[Lemma 3.2.11]{Papikian2023}
If $\operatorname{char}_A(K) = \mathfrak{p} \neq 0$ and $\phi$ is a Drinfeld module of rank $R$ over $K$, then there is an integer $1 \leq H (\phi) \leq R$, called the \textbf{height} of $\phi$, such that for all $0\neq a\in A$ we have
\[
\height(\phi_a) = H(\phi)\cdot \ord_\mathfrak{p}(a)\cdot \deg_T(\mathfrak{p})
\]
where $\ord_\mathfrak{p}(a)$ is the largest power of $\mathfrak{p}$ dividing $a$.     
\end{proposition}

If a Drinfeld module $\phi$ of rank $R$ over a finite field field is such that $H(\phi)=R$, then $\phi$ is called \emph{supersingular}. The interested reader should consult \cite[Section 4.4]{Papikian2023} for further background and additional properties of supersingular Drinfeld modules. Such type of Drinfeld modules are quite special, as the following result shows. 

\begin{proposition}\label{prop:SupersingEquiv}
Let $K\supseteq \vF_q$ be a finite field, and $\phi:\vF_q[T]\rightarrow K\{\tau\}$ be a Drinfeld module of rank $R$ and $\operatorname{char}_A(K)= \mathfrak{p}$. The followings are equivalent:
\begin{enumerate}
    \item $\phi$ is supersingular (i.e. $H(\phi)=R)$
    \item $\phi[\mathfrak{p}^k]=\{0\}$ for all $k\geq 1$ 
\end{enumerate}
  
\end{proposition}
\begin{proof}
Refer to \cite[Theorem 4.4.1]{Papikian2023} for a proof and more equivalent conditions for a Drinfeld module to be supersingular. 
\end{proof}

By definition of height, the simplest example of supersingular Drinfeld module is given by any Drinfeld module $\phi$ of rank 1, indeed $1\leq H(\phi)\leq R=1$. Another example is shown below.

\begin{example}\label{ex:supersing}\cite[Example 4.4.5]{Papikian2023}
Let $R$ be a prime, and $m$ be an integer coprime to $R$. Let $\alpha$ be a root of an irreducible polynomial $\mathfrak{p}\in \vF_q[T]$ of degree $m$. The Drinfeld module $\phi:\vF_q[T]\rightarrow \vF_{q^m}\{\tau\}$ given by $\phi_T=\alpha +\tau^R$ is supersingular.     
\end{example}

It turns out that the set of $a$-torsion points has a more general structure than just being an $\vF_q$-vector space. The following result tells that $\phi[a]$ has an $A$-module structure, or equivalently that every $f\in\phi(A)$ acts as an endomorphism of $\phi[a]$.

\begin{proposition}
For any fixed $a\in\vF_q[T]$, the set of $a$-torsion points $\phi[a]$ is a finite $\vF_q[T]-$module with scalar multiplication defined by $b\circ \alpha=\phi_b(\alpha)$, for some $b\in \vF_q[T]$ and $\alpha \in \phi[a]$.
\end{proposition}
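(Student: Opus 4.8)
The plan is to verify directly that $\phi[a]$, equipped with the scalar action $b\circ\alpha:=\phi_b(\alpha)$, satisfies the axioms of a finite $\vF_q[T]$-module. First I would dispose of finiteness. Taking $a\neq 0$ (the statement being understood for nonzero $a$, since $\phi_0=0$ has all of $\overline K$ as its kernel), the division polynomial $\phi_a(x)=\iota(\phi_a)$ is a nonzero element of $\overline K[x]$ of degree $[n]=q^{n}$ with $n=\deg_T(a)\cdot R$, because $h_n(a)\neq 0$ by Definition \ref{drinfeldmodule}. A nonzero one-variable polynomial over a field has finitely many roots, so $\phi[a]=\operatorname{Ker}\!\left(\phi_a(x)\right)$ is finite.

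Next I would record that $\phi[a]$ is already an $\vF_q$-vector space: since $\phi_a(x)$ is a $q$-linearized polynomial, it acts as an $\vF_q$-linear operator on $\overline K$ (the fact cited from \cite[Chapter 3.1]{Papikian2023}), so its kernel is an $\vF_q$-subspace of $\overline K$. In particular $0\in\phi[a]$ and $\phi[a]$ is closed under addition and under multiplication by elements of $\vF_q\subseteq\vF_q[T]$.

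The key point is that the action $b\circ\alpha=\phi_b(\alpha)$ is well defined, i.e.\ that $\phi_b$ maps $\phi[a]$ into itself. This is where commutativity of $\vF_q[T]$ enters: since $\phi\colon\vF_q[T]\to K\{\tau\}$ is a ring homomorphism and $\vF_q[T]$ is commutative, $\phi_a\phi_b=\phi_{ab}=\phi_{ba}=\phi_b\phi_a$ in $K\{\tau\}$, and applying the isomorphism $\iota$ this identity of twisted polynomials becomes an identity of $q$-linearized operators on $\overline K$, namely $\phi_a(x)\circ\phi_b(x)=\phi_b(x)\circ\phi_a(x)$. Hence for $\alpha\in\phi[a]$,
\[
\phi_a\bigl(\phi_b(\alpha)\bigr)=\phi_b\bigl(\phi_a(\alpha)\bigr)=\phi_b(0)=0,
\]
so $\phi_b(\alpha)\in\phi[a]$. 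I expect this closure step to be the only genuinely non-formal ingredient; everything else is bookkeeping.

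Finally I would check the module axioms, each an immediate consequence of $\phi$ being an $\vF_q$-algebra homomorphism together with the additivity of each $\phi_b$. Namely: $(b_1+b_2)\circ\alpha=\phi_{b_1+b_2}(\alpha)=\phi_{b_1}(\alpha)+\phi_{b_2}(\alpha)=b_1\circ\alpha+b_2\circ\alpha$ since $\phi$ is additive; $(b_1b_2)\circ\alpha=\phi_{b_1b_2}(\alpha)=\phi_{b_1}\bigl(\phi_{b_2}(\alpha)\bigr)=b_1\circ(b_2\circ\alpha)$ since $\phi$ turns products into compositions; $1\circ\alpha=\phi_1(\alpha)=\alpha$ because $\phi(1)=\tau^0$ is the identity operator; and $b\circ(\alpha_1+\alpha_2)=\phi_b(\alpha_1+\alpha_2)=\phi_b(\alpha_1)+\phi_b(\alpha_2)=b\circ\alpha_1+b\circ\alpha_2$ since $\phi_b$ is $\vF_q$-linear, hence additive. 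Combining this with the first step, $\phi[a]$ is a finite $\vF_q[T]$-module, as claimed.
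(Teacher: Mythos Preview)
Your proof is correct. The paper actually states this proposition without proof, treating it as a standard background fact (in the spirit of the surrounding references to \cite{Papikian2023}); your direct verification of finiteness, closure under the action via $\phi_a\phi_b=\phi_{ab}=\phi_b\phi_a$, and the module axioms is exactly the standard argument one would supply.
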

\begin{proof}
Just notice that $\phi_a(\phi_b(\alpha))=\phi_b(\phi_a(\alpha))=\phi_b(0)=0$.
\end{proof}

Another useful fact about the structure of the $ab$-torsion points for coprime $a,b\in \vF_q[T]$ is the following proposition. 

\begin{proposition}\cite[Lemma 3.5.1]{Papikian2023}\label{prop:orthogonality}
If $a,b \in \vF_q[T]$ are relatively prime nonzero elements, then, $\phi[ab]=\phi[a]\times \phi[b]$ as $\vF_q[T]$-modules.   
\end{proposition}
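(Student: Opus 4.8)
The plan is to reduce everything to the Chinese Remainder Theorem in the principal ideal domain $\vF_q[T]$, exactly as one proves the analogous decomposition of torsion subgroups of abelian groups. Since $a$ and $b$ are coprime, Bézout gives $u,v\in\vF_q[T]$ with $ua+vb=1$; applying the $\vF_q$-algebra homomorphism $\phi$ and using additivity together with the fact that $\phi_1$ is the identity operator on $\overline{K}$, we obtain the operator identity $\id=\phi_{ua}+\phi_{vb}$. Since $\vF_q[T]$ is commutative, the operators $\phi_a,\phi_b,\phi_u,\phi_v$ pairwise commute and $\phi_{ab}=\phi_a\circ\phi_b=\phi_b\circ\phi_a$; these two facts are all the argument really needs.

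First I would verify the inclusions $\phi[a]\subseteq\phi[ab]$ and $\phi[b]\subseteq\phi[ab]$: if $\phi_a(\alpha)=0$ then $\phi_{ab}(\alpha)=\phi_b(\phi_a(\alpha))=0$, and symmetrically, so the $\vF_q[T]$-submodule $\phi[a]+\phi[b]$ lands inside $\phi[ab]$. For the reverse inclusion, given $\alpha\in\phi[ab]$ I would set $\alpha_a:=\phi_{vb}(\alpha)$ and $\alpha_b:=\phi_{ua}(\alpha)$; then $\alpha=\alpha_a+\alpha_b$ by the identity above, while $\phi_a(\alpha_a)=\phi_v(\phi_{ab}(\alpha))=0$ and $\phi_b(\alpha_b)=\phi_u(\phi_{ab}(\alpha))=0$, so $\alpha_a\in\phi[a]$ and $\alpha_b\in\phi[b]$; hence $\phi[ab]=\phi[a]+\phi[b]$. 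Next I would show the sum is direct: if $\alpha\in\phi[a]\cap\phi[b]$, then $\alpha=\phi_{ua}(\alpha)+\phi_{vb}(\alpha)=\phi_u(\phi_a(\alpha))+\phi_v(\phi_b(\alpha))=0$, so $\phi[a]\cap\phi[b]=\{0\}$. Finally, the addition map $\phi[a]\times\phi[b]\to\phi[ab]$ is $\vF_q[T]$-linear, since the scalar action $c\circ(-)=\phi_c(-)$ is additive and stabilizes each factor; it is surjective by the decomposition just obtained and injective by the triviality of the intersection, hence an isomorphism of $\vF_q[T]$-modules.

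I do not anticipate a genuine obstacle here: the content is purely formal and uses only that $\phi$ is a ring homomorphism out of the commutative ring $\vF_q[T]$. The points worth stating carefully are that the scalar multiplication on torsion modules is $b\circ\alpha=\phi_b(\alpha)$ (so the $\vF_q[T]$-linearity of every map in sight is automatic from $\phi$ being an $\vF_q$-algebra map), and that although $K\{\tau\}$ is noncommutative, its subring $\phi(\vF_q[T])$ is commutative, which is exactly what legitimizes the free rearrangement of $\phi_a,\phi_b,\phi_u,\phi_v$ above; the finiteness of the three modules is inherited from the previous proposition and is irrelevant to the decomposition itself.
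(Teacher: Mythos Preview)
Your argument is correct: this is precisely the standard Chinese Remainder/B\'ezout decomposition for torsion submodules over a PID, and every step you outline goes through as written. The paper itself does not supply a proof of this proposition---it simply cites \cite[Lemma 3.5.1]{Papikian2023}---so there is no in-paper argument to compare against; your proof is in fact the expected one and is almost certainly the same as the one in the cited reference.
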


\subsection{Rank Metric Codes}\label{sec:rkmetricbg}
In this section, we provide some background about rank-metric codes based on \cite[Ch II.11]{huffman2021concise}. Throughout this section, we denote by $\vF^{m\times n}_q$ the $m\times n$ matrices with entries in  $\vF_q$. Now, we define the rank distance on $\vF^{m\times n}_q$, which is of fundamental importance in the theory of rank-metric codes.
\begin{definition}
Let $A,B\in \vF^{m\times n}_q$ and let $\rk_q(A)$ be the usual rank of $A$. We define the \textit{rank distance} between $A$ and $B$ as $d_R(A,B)=\rk_q(A-B)$. 
\end{definition}
One can check that this is indeed a distance on $\vF^{m\times n}_q$.
\begin{definition}
A (matrix) \textit{rank-metric code} $\mathcal{C}$ is an $\vF_q-$ linear subspace of $\vF^{m\times n}_q$. We define the $\textit{minimum distance}$ or simply the distance of a nonzero code to be
$$d_R(\mathcal{C}):=\min \{d_R(A,B): A,B\in \mathcal{C} \textup{ and } A\neq B\}$$
\end{definition}
For any $x=(x_1,\ldots,x_n)\in \vF^n_{q^m}$, we can define a matrix $M_x\in \vF^{m\times n}_q$ associated with $x$ up to the choice of the basis as follows. Fix a basis $\mathcal{B}=\{b_1,\ldots,b_m\}$ of $\vF_{q^m}$ over $\vF_q$. Then, $x_j=\sum_{i=1}^mc_{ij}b_i$ for some $c_{ij}\in \vF_q$. Thus, $M_x=[c_{ij}]_{\substack{1\leq i\leq m\\ 1\leq j\leq n}}$. In conclusion, we can define another type of rank-metric codes; namely (vector) rank-metric codes.  
\begin{definition}
	A (vector) \textit{rank-metric} \textit{code} $\mathcal C$ is an $\vF_{q^m}$-subspace of  $\vF^n_{q^m}$. The \textit{minimum distance} of a nonzero code $\mathcal C$ is $d_R(\mathcal C):=\min\{\rk_q(M_x):x\in\mathcal C\textup{ and } x\neq 0\}$. 
\end{definition}
It is immediate to see that every (vector) rank-metric code in $\vF^n_{q^m}$ can be seen as a (matrix) rank-metric code in $\vF_q^{m\times n}$ by fixing an $\vF_q$-basis for $\vF_{q^m}$. Note that our main construction is a (vector) rank-metric and hence also a (matrix) rank-metric code. Next, we introduce the Singleton bound for rank-metric codes.   
\begin{proposition}\cite[Theorem 11.3.5, Remark 11.3.6]{huffman2021concise}\label{prop: singletonRankMetric}
 Let $\mathcal{C}\subseteq \vF_q^{m\times n}$. Every  rank-metric code with minimum distance $d_R(\mathcal{C})$ satisfies the following upper bound.   
 $$|\mathcal{C}|\leq q^{\max\{m,n\}\left(\min\{m,n\}-d_R(\mathcal{C})+1\right)}.$$
In particular, if $\mathcal{C}\subseteq \vF_{q^m}^n$ is a vector rank-metric code, then, 
 \begin{equation*}
     \dim_{\vF_{q^m}}(\mathcal{C})\leq n-d_R(\mathcal{C})+1.
 \end{equation*}
 \end{proposition}
 \begin{definition}
     If a rank-metric code achieves equality in the previous upper bound, we call it a \textit{maximum rank-distance code (MRD)}.
 \end{definition}
 From now on, all our rank-metric codes will be $\vF_q$-subspaces.
 An important family of MRD codes is the family of Gabidulin codes (see \cite{Gabidulin}). We will need properties of these codes in what follows.
\begin{definition}\label{Gabidulin}
Let $n,m,k$ be positive integers such that $m\geq n$. Let $\mathcal{P} = \{p_1,\ldots, p_n\}$ be a set of elements
in $\vF_{q^m}$ that are $\vF_q$-linearly independent. Let $\boldsymbol{m}=(\boldsymbol{m}_0,\ldots,\boldsymbol{m}_{k-1})\in \vF^k_{q^m}$ and 
$\{G_{\boldsymbol{m}}(x)=\sum_{j=0}^{k-1}\boldsymbol{m}_jx^{[j]}\}$ be the set of linearized polynomials of $q$-degree at most $k-1$ in  $\vF_{q^m}[x]$. We define the encoding map as \begin{align*}
 \enc : \vF^k_{q^m}\longrightarrow &\vF^n_{q^m}\\
  \boldsymbol{m}\longmapsto&\{G_{\boldsymbol{m}}(\gamma),\gamma \in \mathcal{P}\}.
\end{align*} 
An $(n,k)$-Gabidulin Code $\mathcal{C}_{\operatorname{Gab}}$ over $\vF_{q^m}$ is defined as $$\mathcal{C}_{\operatorname{Gab}}=\{G_{\boldsymbol{m}}(\gamma): \gamma \in \mathcal{P}, \boldsymbol{m}\in \vF^k_{q^m}\}.$$

\end{definition}

\begin{definition}\cite[Definition 3]{rankmetriccodes}\label{ranklocality}
We say that $\mathcal C\subseteq \vF_q^{m\times n}$ has rank-locality $(r,\delta)$ if, for every column index $i\in \{1,\dots,n\}$, 
there exists a set of columns $\Gamma(i)\subseteq \{1,\dots,n\}$ such that
\begin{itemize}
\item $i\in \Gamma(i)$
\item $|\Gamma(i)|\leq r+\delta -1$
\item $d_{R}(\mathcal C \mid_{\Gamma(i)})\geq \delta$
\end{itemize}
where $\mathcal C\mid_{\Gamma(i)}$ is the code obtained from all the (matrix) codewords in $\mathcal C$ restricted to the columns indexed by $\Gamma(i)$. The code $\mathcal C\mid_{\Gamma(i)}$ is said to be the local code associated with the $i$-th column. A rank-metric code $\mathcal C\subseteq \vF_q^{m\times n}$ of $\vF_q$-dimension $K$ is  said to be an $(m\times n,K)$-code. In particular, an $(n,k)$-Gabidulin code over $\vF_{q^m}$ is an $(m\times n, mk)$ rank-metric code over $\vF_q$. An $(m\times n,K)$-code  with minimum rank  distance $d_R$ and $(r,\delta)$ rank-locality is denoted as an $(m\times n,K,d_R,r,\delta)$ rank-metric code. 
\end{definition}

\begin{proposition}\label{prop: singletonLikeRankMetric}
\cite[Equation 9]{rankmetriccodes}
The rank distance $d_R(\mathcal{C})$ of an $(m\times n,mk)$ rank-metric code with $(r,\delta)$ rank-locality satisfies the following
$$d_R(\mathcal{C})\leq n-k+1-\left(\left\lceil \frac{k}{r}\right\rceil-1\right)(\delta-1)$$
\end{proposition}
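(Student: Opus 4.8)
The plan is to transcribe to the rank metric the classical Singleton-type bound for locally recoverable codes (the Gopalan--Huang--Simitci--Yekhanin bound and its $(r,\delta)$-refinement), the only genuinely rank-metric ingredient being that Proposition~\ref{prop: singletonRankMetric}, applied to each local code, controls its dimension. For $S\subseteq\{1,\dots,n\}$ set $\rho(S):=\dim_{\vF_q}\bigl(\mathcal C\mid_S\bigr)$. The restriction map $\mathcal C\to\mathcal C\mid_S$ is $\vF_q$-linear with image of dimension $\rho(S)$, so if $\rho(S)<mk$ then it has nonzero kernel: there is a nonzero codeword all of whose columns indexed by $S$ vanish, hence a codeword of rank at most $n-|S|$, so that $d_R(\mathcal C)\le n-|S|$. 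It therefore suffices to exhibit a set $S$ with
\[
\rho(S)<mk\qquad\text{and}\qquad |S|\ \ge\ k-1+\Bigl(\Bigl\lceil\tfrac kr\Bigr\rceil-1\Bigr)(\delta-1).
\]

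First I would record two properties of $\rho$: it is monotone and submodular (it is a matroid rank function), and each added column raises it by at most $m$. Moreover, for any locality set $\Gamma=\Gamma(i)$ the local code $\mathcal C\mid_\Gamma\subseteq\vF_q^{m\times|\Gamma|}$ has $d_R\ge\delta$, so Proposition~\ref{prop: singletonRankMetric} gives $|\mathcal C\mid_\Gamma|\le q^{m(|\Gamma|-\delta+1)}$; equivalently $\rho(\Gamma)\le m(|\Gamma|-\delta+1)\le mr$, i.e.\ the ``nullity'' $m|\Gamma|-\rho(\Gamma)$ is at least $m(\delta-1)$.

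Next I would build $S$ greedily. Starting from $S=\emptyset$, while $\rho(S)\le m(k-1)$: pick a column $c\notin S$ strictly increasing $\rho$ (one exists since $\rho(S)<mk=\rho(\{1,\dots,n\})$) and, with $\Gamma=\Gamma(c)$, replace $S$ by $S\cup\Gamma$ if this keeps $\rho(S)\le m(k-1)$, and otherwise adjoin to $S$ a maximal subset of $\Gamma$ keeping $\rho(S)\le mk-1$ and stop. This terminates (each ``full'' step strictly increases $\rho(S)$, which stays $\le mk-1$), the output has $\rho(S)\le mk-1<mk$, and one checks that along the partial step $\rho(S)$ climbs to at least $m(k-1)$. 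Two counts finish the argument. \textbf{(i)} The number $a$ of full steps is at least $\lceil k/r\rceil-1$: a full step raises $\rho$ by at most $mr$ (submodularity together with the bound on $\rho(\Gamma)$), so when the loop stops we must have $\rho(S)>m(k-1-r)$ --- otherwise a further locality set could still be absorbed --- whence $a\cdot mr\ge\rho(S)>m(k-1-r)$ and $a\ge\lceil k/r\rceil-1$. \textbf{(ii)} Each full step increases $|S|-\tfrac1m\rho(S)$ by at least $\delta-1$; summing over the $a$ full steps and the (non-decreasing) partial step, and using $\rho(S)\ge m(k-1)$ at the end, yields $|S|\ge k-1+a(\delta-1)\ge k-1+(\lceil k/r\rceil-1)(\delta-1)$, as required.

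The technical heart --- the step I expect to fight with --- is (ii). Submodularity with the local bound only gives, at the step absorbing $\Gamma$ into the current set $S'$, an increment of $|S|-\tfrac1m\rho(S)$ of at least $(\delta-1)-\tfrac1m\bigl(m|\Gamma\cap S'|-\rho(\Gamma\cap S')\bigr)$, so one must exclude excess nullity coming from the already-absorbed part $\Gamma\cap S'$. In the Hamming-metric proofs this is arranged by choosing the absorbed locality sets so that the overlaps remain ``independent'', using that deleting fewer than $\delta$ coordinates of a local code is injective on it; the same phenomenon is available here because $d_R(\mathcal C\mid_\Gamma)\ge\delta$ forces $\rho(Z)=\rho(\Gamma)$ for every $Z\subseteq\Gamma$ with $|Z|\ge|\Gamma|-\delta+1$. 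Organising the greedy step accordingly, together with the routine bookkeeping above, gives the bound; this is the rank-metric transcription of the proof of \cite[Equation~9]{rankmetriccodes}, to which we ultimately appeal.
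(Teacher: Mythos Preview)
The paper does not actually prove this proposition: its proof reads, in full, ``See \cite[Theorem 1]{rankmetriccodes}.'' Your proposal goes considerably further, sketching the argument behind that citation, and the overall strategy --- greedily absorb locality sets to build a column set $S$ with $\rho(S)<mk$ yet $|S|\ge k-1+(\lceil k/r\rceil-1)(\delta-1)$, then exhibit a nonzero codeword supported outside $S$ --- is exactly the standard route and is what the cited reference carries out.

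Your handling of step~(ii), however, is incomplete in the way you yourself flag. The observation you record, that $\rho(Z)=\rho(\Gamma)$ whenever $|\Gamma\setminus Z|\le\delta-1$, is correct but does not control the nullity of $\Gamma\cap S$ once the overlap exceeds $\delta-1$, and ``organising the greedy step accordingly'' is left vague. The clean argument bypasses overlaps entirely: the image of $\ker\pi_S$ under restriction to $\Gamma\setminus S$ is a nonzero $\vF_q$-linear code in $\vF_q^{m\times|\Gamma\setminus S|}$ (nonzero because you chose $c\in\Gamma\setminus S$ strictly increasing $\rho$), and every nonzero element has rank at least $\delta$, since it coincides with the restriction to $\Gamma$ of a global codeword whose $\Gamma\cap S$ columns vanish. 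Proposition~\ref{prop: singletonRankMetric} applied to \emph{this} shortened local code yields $\rho(S\cup\Gamma)-\rho(S)\le m\bigl(|\Gamma\setminus S|-\delta+1\bigr)$ directly, so each full step raises $|S|-\tfrac1m\rho(S)$ by at least $\delta-1$ with no hypothesis on the overlap. With that patch your outline becomes a complete proof; since you, like the paper, ultimately defer to \cite{rankmetriccodes}, the two are consistent.
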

\begin{proof}
See \cite[Theorem 1]{rankmetriccodes}    
\end{proof}

\section{General Theorem and Main Construction}\label{sec:GeneralMainConstruction}\label{sec:genconstr}
For this section and the theorems within, we will use the following notation so we will be able to make the statements of each theorem lighter.
Let $q$ be a prime power, and $\vF_q$ be the finite field with $q$ elements. For $m,t>0$ let $\vF_{q^m}$ be the algebraic field extension of $\vF_q$ of degree $m$, and $K$ be a finite extension of $\vF_{q^m}$ of degree $t$ (namely $[K:\vF_q]=tm)$. We denote with $A\coloneqq\vF_q[T]$ the ring of polynomials in $T$ with coefficient over $\vF_q$. Let $0<\ell<q$, $r\geq 1$ and $\delta\geq 2$ be positive integers, and define $R\coloneqq r+\delta-1$.

\begin{theorem}\label{thm:GeneralMainThm}
    Choose distinct $a_1,\ldots,a_\ell\in\vF_q^\times$ and define the polynomial $h=\prod_{i=1}^\ell(T-a_i)\in A$. Let $s$ be such that $s+1\leq\ell$. Let $\phi:A\rightarrow \vF_{q^m}\{\tau\}$ be a Drinfeld module of rank $R$, with $A$-characteristic being a monic irreducible polynomial $\mathfrak{p}$ of degree $m$ such that $\gcd(\mathfrak{p}, h)=1$. Assume $\vF_{q^m}(\phi[h])\subseteq K$. Then, there exists a
    \[
    \left(tm\times \ell R, tm(s+1)r,\ell R-Rs-r+1,r,\delta\right)
    \]
    (matrix) rank-metric code that is optimal w.r.t. the bound shown in Proposition~\ref{prop: singletonLikeRankMetric}.
\end{theorem}
In subsection \ref{sec:CodeConstruction} we will prove Theorem \ref{thm:GeneralMainThm}.
Knowing that $\vF_{q^m}(\phi[h])\subseteq K$ is fundamental to construct our code. In Section~\ref{sec:TwoCases} we will show the existence of at least two families of Drinfeld modules for which such assumption is always verified. 
This will amount to finding an irreducible polynomial  within certain wanted residue class modulo $h$. 
In general this is possible even for small $m$ (see Section~\ref{sec:exampleCarlitz} and \ref{sec:exampleSuper} for examples). In Section~\ref{sec:infinfam} we will use Dirichlet Theorem for polynomial arithmetic progressions to provide a proof  that there exists a wide range of parameters for which we can guarantee the existence of a monic irreducible $\mathfrak{p}$ such that $\vF_{q^m}(\phi[h])\subseteq K$.

\begin{lemma}
With notation and hypothesis as in Theorem~\ref{thm:GeneralMainThm}, we have that $$\dim_{\vF_q}\phi[h]=\ell R.$$ In particular $\dim_{\vF_q}\phi[T-a_i]=R.$
\end{lemma}
\begin{proof}
Since $\gcd(\mathfrak{p},h) =1$, by Proposition~\ref{prop:height}, we have
\[
\height(\phi_{h}) = H(\phi)\cdot\ord_\mathfrak{p}(h)\cdot\deg(\mathfrak{p})=0.
\]
This means that $\phi_{h}$ is separable, and since $\deg_\tau(\phi_h)=\ell R$, then $\dim_{\vF_q}\phi[h]=\dim_{\vF_q}\Ker(\phi_{h}(x))=\ell R$. In particular, since $\gcd(\mathfrak{p},h) =1$, it is also $\gcd(\mathfrak{p},T-a_i)=1$ for every $i=1,\ldots,\ell$. Therefore, applying the same argument we prove our claim.
\end{proof}

\subsection{Main Code Construction}\label{sec:CodeConstruction}
In this subsection we prove Theorem \ref{thm:GeneralMainThm}.
Let us start the construction of our code by choosing distinct $a_1,\dots a_\ell\in \vF_q^\times$ and $h=\prod^\ell_{i=1}(T-a_i)$ for some $\ell>0$. Let $\phi$ be a Drinfeld module of rank $R=r+\delta-1$ over $\vF_{q^m}$ with $\vF_q[T]$-characteristic being a monic irreducible polynomial $\mathfrak{p}$ of degree $m$ such that $\gcd(\mathfrak{p},h)=1$ and $\vF_{q^m}(\phi[h])\subseteq K=\vF_{q^{tm}}$, for some $t\geq 1$.

The following conditions are a consequence of the hypothesis of Theorem \ref{thm:GeneralMainThm}. We will be using them to prove the theorem.

\begin{enumerate}\label{conditions}
    \item[(C0)] $\ell<q$: since the $a_i$'s have to be distinct,
    \item[(C1)] $s+1\leq\ell $: included in the hypothesis, 
    \item[(C2)] $\ell R \leq tm$: notice that, since $h$ has all distinct roots and  $\gcd(\mathfrak p,h)=1$, then by Proposition \ref{prop:height} we have that $\phi_h$ has all distinct roots from which it follows $\dim_{\vF_q}(\phi[h])=\ell R\leq \dim_{\vF_q}(K)$.
\end{enumerate}
Roughly, these conditions simply allow for enough space in $\vF_q^{tm\times \ell R}$ to construct a code with the wanted dimension and $(r,\delta)$-rank-locality.

The message space $\mathcal M$ is given by 
\[
\mathcal M=\left\{\sum^{s}_{k=0} g_k(\tau) {\phi}_T^k \mid g_k\in \vF_{q^{tm}}\{\tau\}_{\leq r-1}\right\}.
\]
For $i=1,\ldots,\ell$ let $W_{i}=\phi[T-a_i]$ and observe that by Propostion \ref{prop:orthogonality} we have that $\phi[h]=\oplus^{\ell}_{i=1} W_i$. Let $\{\beta^{(i)}_{1},\dots, \beta^{(i)}_R\}$ be a basis for $W_{i}$. The encoding map is then given by
\begin{align*}
\enc: \mathcal M & \longrightarrow \vF^{\ell R}_{q^{tm}}\\*
f &\longmapsto \left(f(\beta^{(i)}_j)\right)_{\substack{i\in \{1,\dots,\ell\}\\ j\in \{1,\dots,R\}} }.
\end{align*}
C1 guarantees that the encoding is injective, and therefore one obtains the wanted dimension. Moreover, the fact that $\vF_{q^m}(\phi[h])\subseteq K$ gives that the $\beta_j^{(i)}$'s are independent elements of $K=\vF_{q^{tm}}$.

\begin{theorem}\label{thm:CodeOptimal}
The code defined by $\enc(\mathcal M)$ is an optimal
\[
\left(tm\times \ell R, tm(s+1)r,\ell R-Rs-r+1,r,\delta\right)
\]
(matrix) rank-metric code w.r.t. the bound in Proposition~\ref{prop: singletonLikeRankMetric}.
\end{theorem}
\begin{proof}
Let $W=\phi[h]=\phi\left[\prod_{i=1}^{\ell}(T-a_i)\right]$. To prove the length of the code is $\ell R$, first, we show that the $W_i$'s are orthogonal. Observe that $\phi[T-a_i]$ is an $\vF_q[T]-$module. Given $\operatorname{gcd}(T-a_1,\ldots,T-a_{\ell})=1$ then, by Proposition \ref{prop:orthogonality}, we have
\[
W=\phi\left[\prod_{i=1}^{\ell}(T-a_i)\right]=\prod_{i=1}^{\ell}\phi[T-a_i]=\prod_{i=1}^{\ell}W_i.
\]
Now, if we consider $W_i$'s as $\vF_q$-vector spaces, since $W_j\cap \bigoplus_{\scriptstyle \substack{i=1\\  i\neq j}}^{\ell}W_i=\{0\}$, then $W=\bigoplus_{i=1}^{\ell}W_i$.
In particular, $\{\beta_{j}^{(i)}\}_{\substack{1\leq i\leq \ell\\ 1\leq j\leq R}}\subseteq \vF_{q^{tm}}$ is an $\vF_q$-basis of $W$ and thus  $\dim_{\vF_q}(W)=|\{\beta_{j}^{(i)}\}_{\substack{1\leq i\leq \ell\\ 1\leq j\leq R}}|=\ell R$. This shows that the length of $\enc(\mathcal M)$ (as a vector rank-metric code) is $\ell R$. Finally, by fixing an $\vF_q$-basis for $\vF_{q^{tm}}$, each element in $\enc(\mathcal M)$ can be seen as a matrix in  $\vF_q^{tm\times \ell R}$.

We now show that the $\vF_{q^{tm}}$-dimension of the code is $(s+1)r$. Set $f(\tau)\in \mathcal{M}$. From the construction, $f(\tau)=\sum^{s}_{k=0} g_k(\tau) {\overline{\phi}}_{T}^k$ for some $g_k\in \vF_{q^{tm}}\{\tau\}_{\leq r-1}$. Now, for every $k\in \{0,\ldots,s\}$, we have $g_k(\tau)=\sum_{j=0}^{r-1}a^{(k)}_j\tau^j$ where $a^{(k)}_j\in \vF_{q^{tm}}$ for every $j\in\{0,\ldots,r-1\}$.
This means that every vector $$\boldsymbol{a}=\left(a^{(0)}_0,\ldots,a^{(0)}_{r-1},\ldots, a^{(k)}_0,\ldots,a^{(k)}_{r-1},\ldots,a^{(s)}_0,\ldots,a^{(s)}_{r-1}\right)\in \vF_{q^{tm}}^{(s+1)r}$$ can be associated to $g_0(\tau),\ldots,g_s(\tau)\in \vF_{q^{tm}}\{\tau\}_{\leq r-1}$, and thus to each $f(\tau)\in \mathcal{M}$. This shows that the $\vF_{q^{tm}}$-dimension of the code is $(s+1)r$ as long as the encoding map is injective. To see the injectivity of the map $\enc$, simply observe that $\deg_\tau(\phi_T)=R$, and a non-zero polynomial of linearized degree $sR+r-1<(s+1)R\leq \ell R$ cannot be the zero map on a space of $\vF_{q^{tm}}$-dimension $\ell R$. This shows that the $\vF_q$-dimension of the code is $tm(s+1)r$.

We now need to establish $(r,\delta)$-rank-locality according to Definition \ref{ranklocality}. Considering the code as a matrix rank-metric code (i.e. codewords are matrices), suppose $\gamma\in \{1,\ldots,\ell R\}$ is a column index. Then, the $\gamma$-th column in every codeword is an entry of the form $f(\beta^{(i_\gamma)}_{j_\gamma})$ for some $i_\gamma\in \{1,\ldots,\ell\}$ and $j_\gamma\in \{1,\ldots,R\}$ where $\beta^{(i_\gamma)}_{j_\gamma}$ is an element of the $\vF_q$-basis of $W_{i_\gamma}$.
If $\{\beta^{(i_\gamma)}_1,\ldots,\beta^{(i_\gamma)}_R\}$ is the $\vF_q$-basis of $W_{i_\gamma}$, then we define the set of column indexes $\Gamma(\gamma)\subset \{1,\ldots,\ell R\}$ to be the set of indexes of the columns associated to $f(\beta^{(i_\gamma)}_1),\ldots,f(\beta^{(i_\gamma)}_{R})$. It is obvious that $\gamma\in \Gamma(\gamma)$ and $|\Gamma(\gamma)|=\dim_{\vF_q}(W_{i_\gamma})=R=r+\delta-1$, satisfying the first two conditions of Definition~\ref{ranklocality}.

To check the third, consider the local code $\enc(\mathcal{M})|_{\Gamma(\gamma)}$ associated with the $\gamma$-th column. For a fixed $1\leq i_\gamma\leq \ell$ and basis $\{\beta_{j}^{(i_\gamma)}\}_{ 1\leq j\leq R}\subseteq \vF_{q^{tm}}$ of $W_{i_\gamma}=\phi[T-a_{i_\gamma}]$, the local code associated with the $\gamma$-th column is as follows: 
\[
\enc(\mathcal{M})|_{\Gamma(\gamma)}=\left \{\left(f(\beta^{(i_\gamma)}_1),\ldots,f(\beta^{(i_\gamma)}_R) \right) \bigg| f\in \mathcal{M} \right\}.
\]
However, we can rewrite $f(x)\in \vF_{q^{tm}}\langle x\rangle$ restricted to $\phi[T-a_{i_\gamma}]$ as
\begin{equation}\label{eq:restrict}
f\vert_{\phi[T-a_{i_\gamma}]}(x)=\sum_{k=0}^s\left(g_k( a_{i_\gamma}^kx)\right)=\sum_{k=0}^sa^k_{i_\gamma}g_k(x).
\end{equation}
where $g_k(x)\in \vF_{q^{tm}}\langle x\rangle$ with $q$-degree at most $r-1$. If $\vF_{q^{tm}}\langle x \rangle_{\leq r-1}$ denotes the set of $q$-linearized polynomials with coefficients in $\vF_{q^{tm}}$ and $q$-degree at most $r-1$, then we have that $f\vert_{\phi[T-a_{i_\gamma}]}(x)$ lives in $\vF_{q^{tm}}\langle x \rangle_{\leq r-1}$. In light of this, we observe that 
\[
\enc(\mathcal{M})|_{\Gamma(\gamma)}= \left \{\left(p(\beta^{(i_\gamma)}_1),\ldots,p(\beta^{(i_\gamma)}_R) \right) \bigg| p\in \vF_{q^{tm}}\langle x \rangle_{\leq r-1}\right\}=:G.
\]
By Definition \ref{Gabidulin}, $G$ lives in an $(R,r)$-Gabidulin code. Since Gabidulin codes are MRD and using  Proposition $\ref{prop: singletonRankMetric}$, $d_R(\enc(\mathcal{M})|_{\Gamma(\gamma)})=R-r+1=\delta$. This proves that $\enc (\mathcal{M})$ is an $\left(tm\times \ell R, tm(s+1)r\right)$ rank-metric code with $(r,\delta)$-rank-locality. In view of this and by Proposition \ref{prop: singletonLikeRankMetric}, we conclude that 
$$
d_R(\enc(\mathcal M))\leq \ell R - (s+1)r+1-(s+1-1)(R-r)=\ell R-Rs-r+1.$$
Finally, observe that  $\enc(\mathcal M)$ is the set of evaluations of $q$-polynomials of the form $\sum^{s}_{k=0} g_k(x)\circ \phi_T^k(x)$ with $q$-degree at most $Rs+r-1$ (maximum  $q$-degree of $g_k(x)$ and $\phi_T^k(x)$ are at most $r-1$ and $Rs$ respectively) over an $\vF_q$-linearly independent set of size $\ell R\leq tm$ which is $\{\beta_{j}^{(i)}\}_{\substack{1\leq i\leq \ell\\ 1\leq j\leq R}}$. In other words, $\enc(\mathcal M)$ is a subcode of an $(\ell R,Rs+r)$ Gabidulin code. Hence
$$d_R\left(\enc(\mathcal M)\right)\geq \ell R-Rs-r+1.$$
In conclusion, $\enc(\mathcal{M})$ is an optimal $\left(tm\times \ell R, tm(s+1)r,\ell R-Rs-r+1,r,\delta\right)$ matrix rank-metric code.
\end{proof}

\section{Supersingular Drinfeld Modules}\label{sec:TwoCases}

In this section, we show how to realize the ambient space $K$ of our code by choosing proper types of Drinfeld modules. Our goal is to describe enough conditions for the inclusion $\vF_{q^m}(\phi[h])\subseteq K$ to occur, which is critical to be able to work over extension fields whose degree can be controlled.

\subsection{Supersingular Drinfeld modules of rank $R>1$}\label{sec: supersingular}
In this subsection we focus on how to determine the ambient space of our code using supersingular Drinfeld modules of rank $R>1$. We start outlining the following result.

\begin{theorem}\label{thm: supersingular}
Let $\ell <q$, and consider $h=\prod_{i=1}^{\ell}(T-a_i)\in\vF_q[T]$ where $a_i\in \mathbb{F}^{\times}_q$ for every $i\in\{1,\ldots,\ell\}$. Let $\mathfrak{p} \in \mathbb{F}_q[T]$ be a monic irreducible polynomial of degree $m\geq \ell$, and $\alpha \in \mathbb{F}_{q^m}$ be one of its roots. Assume that $\phi: \mathbb{F}_q[T]\rightarrow \mathbb{F}_{q^m}\{\tau\}$ is a supersingular Drinfeld module of rank $R>1$ defined by $\phi_T=\alpha +\ldots+ g\tau^R$ for some $g\in \mathbb{F}^{\times}_{q^m}$.
Then, $\mathbb{F}_{q^m}(\phi[h])\subseteq \mathbb{F}_{q^{mRb}}$ if $b$ is the order of $\mathfrak{p}c\pmod h$ in the multiplicative group $\left(\faktor{\vF_q[T]}{\langle h \rangle}\right)^\times$ where $c=\frac{(-1)^{mR-m-R-1}}{\operatorname{Nr}_{{\mathbb{F}_{q^m}/\mathbb{F}_q}(g)}} \in \vF^{\times}_q$.
In particular, $\mathbb{F}_{q^m}(\phi[h])\subseteq \mathbb{F}_{q^{mR}}$ if $\mathfrak{p}c\equiv 1 \pmod h$.    
\end{theorem}
\begin{proof}
By \cite[Excercise 4.2.2]{Papikian2023}, we know that $\tau^{mR}=c\phi_{\mathfrak{p}}$. In other words, $\tau^{mR} = \phi_{\mathfrak{p}c}$. Taking into account the action on $\phi[h]$, we have $\tau^{mR}\vert_{\phi[h]}=\phi_{\mathfrak{p}c\pmod h}\vert_{\phi[h]}$. Hence, $\tau^{mRb}\vert_{\phi[h]}=1\vert_{\phi[h]}$ and $(\tau^{mRb}-1)\vert_{\phi[h]}=0\vert_{\phi[h]}$. In other words, we have that $\tau^{Rmb}$ acts like the identity on $\phi[h]$ which forces $\mathbb{F}_{q^m}(\phi[h])\subseteq\mathbb{F}_{q^{mRb}}$.     
\end{proof}

As a particular case of the last result, we show also the following.

\begin{corollary}
Let $\ell <q$, and consider $h=\prod_{i=1}^{\ell}(T-a_i)\in\vF_q[T]$ where $a_i\in \mathbb{F}^{\times}_q$ for every $i\in\{1,\ldots,\ell\}$. Let $\mathfrak{p} \in \mathbb{F}_q[T]$ be a monic irreducible polynomial of degree $m$, and $\alpha \in \mathbb{F}_{q^m}$ be one of its roots. Suppose $R$ is a prime such that $R\nmid m$, and consider the Drinfeld module $\phi: \mathbb{F}_q[T]\rightarrow \mathbb{F}_{q^m}\{\tau\}$ such that $\phi_T=\alpha+\tau^R$. Given $m<R\ell$, we have $\mathbb{F}_{q^m}(\phi[h])=\mathbb{F}_{q^{mR}}$ if and only if $\mathfrak{p}\equiv 1 \pmod h$.  
\end{corollary}
\begin{proof}
By Example~\ref{ex:supersing}, the Drinfeld module $\phi$ is indeed supersingular. Notice that $\#\phi[h]=q^{\ell R}$. If $m<\ell R$, then, $\mathbb{F}_{q^m}(\phi[h])\supsetneq \mathbb{F}_{q^m}$. In light of Theorem \ref{thm: supersingular}, when $\mathfrak{p}\equiv 1 \pmod h$, it follows that $\mathbb{F}_{q^m}(\phi[h])\subseteq \mathbb{F}_{q^{mR}}$. Since $R$ is a prime, it follows that $\mathbb{F}_{q^m}(\phi[h])= \mathbb{F}_{q^{mR}}$. Conversely, if $\mathbb{F}_{q^{mR}}$ is the $h$-division field of $\phi$ then $\phi_{\mathfrak{p}\pmod h}\vert_{\phi[h]} = \tau^{mR}\vert_{\phi[h]}= 1\vert_{\phi[h]}$ which implies that $\mathfrak{p}\equiv 1 \pmod h$. 
\end{proof}

We refer the reader to Section \ref{sec:exampleSuper} for an example of a code that is constructed based on the previous discussion on supersingular Drinfeld modules.

\subsection{Drinfeld modules of rank $R=1$}\label{sec: rank 1}
As shown in Section~\ref{sec:GeneralMainConstruction}, the rank $R$ of the Drinfeld module determines the size of the locality set of the code. This means that using directly rank one Drinfeld modules gives locality sets of size 1, and so it is not possible to recover any information. Nevertheless, if we use powers of Drinfeld modules of rank 1, the construction is still possible.

\begin{lemma}\label{thm: frob carlitz}
Suppose $R>1$. Let $\ell <q$, and consider $H=\prod_{i=1}^{\ell}(T^R-a_i)\in\vF_q[T]$ where $a_i\in \mathbb{F}^{\times}_q$ for every $i\in\{1,\ldots,\ell\}$. Consider a monic irreducible polynomial $\mathfrak{q} \in \mathbb{F}_q[T]$ of degree $m\geq\ell R$ and $\alpha \in \mathbb{F}_{q^m}$ one of its roots. Consider the Drinfeld module $\psi: \mathbb{F}_q[T]\rightarrow \mathbb{F}_{q^m}\{\tau\}$ defined by $\psi_T=\alpha + g\tau$ for some $g\in \mathbb{F}^{\times}_{q^m}$. Then, $\mathbb{F}_{q^m}(\psi[H])=\mathbb{F}_{q^{mb}}$ if and only if $b$ is the order of $\frac{\mathfrak{q}}{\operatorname{Nr}_{\mathbb{F}_{q^m}/\mathbb{F}_q}(g)}$ in the multiplicative group $\left(\faktor{\vF_q[T]}{\langle H \rangle}\right)^\times$. In particular, $\mathbb{F}_{q^m}(\psi[H])=\mathbb{F}_{q^m}$ if and only if $\mathfrak{q}\equiv \operatorname{Nr}_{\mathbb{F}_{q^m}/\mathbb{F}_q}(g) \pmod H$.  
\end{lemma}
\begin{proof}
Denote $\operatorname{Nr}_{\mathbb{F}_{q^m}/\mathbb{F}_q}(g)$ by $\operatorname{Nr}(g)$. Using \cite[Theorem 4.2.7(3)]{Papikian2023}, we know that  $\psi_{\mathfrak{q}}=\operatorname{Nr}(g)\tau^m$ which means $\psi_{\mathfrak{q}/\operatorname{Nr}(g)}=\tau^m$. Observe that $\mathbb{F}_{q^{mb}}$ is the $H$-division field of $\psi$ if and only if $ \tau^{mb}\vert_{\psi[H]}=1\vert_{\psi[H]}$ and $b$ is minimal, i.e. $\vF_{q^{mb}}$ is the smallest extension containing $\vF_{q^m}$ that contains $\phi[h]$. This is satisfied if and only if $ \psi_{\left(\mathfrak{q}/\operatorname{Nr}(g)\right)^b}\vert_{\psi[H]}=1\vert_{\psi[H]}$ (and $b$ is minimal) if and only if $ \left(\mathfrak{q}/\operatorname{Nr}(g)\right)^b \equiv 1 \pmod H$ (and $b$ is minimal). Finally, this is true if and only if $b$ is the order of $\mathfrak{q}/\operatorname{Nr}(g)$ in the multiplicative group $\left(\faktor{\vF_q[T]}{\langle H \rangle}\right)^\times$, as wanted.   
\end{proof}

Notice that Lemma \ref{thm: frob carlitz} can also be seen using  \cite[Theorem 7.1.19]{Papikian2023}. This lemma allows the construction of our code over a controllable extension field, as we now explain.
\begin{enumerate}

\item Choose $\ell$, $q$, $s$, $m$, and $R$ that satisfy conditions  C0, C1, C2 in Section~\ref{conditions} with $t=1$;

    \item Choose $a_i$'s $\in \mathbb{F}^\times_q$ and construct the polynomial $H=\prod_{i=1}^{\ell}(T^R-a_i)$;

 \item Find an irreducible polynomial $\mathfrak{q}\in \mathbb{F}_q[T]$ of degree $m\geq \ell R$ such that $\mathfrak{q}\equiv c \pmod H$, for some $c \in \mathbb{F}^{\times}_q$ (this is possible for most parameters using Dirichlet Theorem, see Section~\ref{sec:infinfam});

    \item Choose some $g\in \operatorname{Nr}^{-1}_{\mathbb{F}_{q^m}/\mathbb{F}_q}(c)$ and consider the Drinfeld module over $\mathbb{F}_{q^m}$ defined by $\psi_T=\alpha+g\tau$ where $\alpha$ is a root of $\mathfrak{q}$;
    
    \item By Lemma~\ref{thm: frob carlitz}, we have $\mathbb{F}_{q^m}(\psi[H])=\mathbb{F}_{q^{m}}\eqqcolon K$;
    
    \item Define the Drinfeld module $\phi$ of rank $R$ over $\mathbb{F}_{q^m}$ by $\phi_T=\psi_{T^R}$. Observe that the $A$-characteristic $\mathfrak{p}$ of $\phi$ is simply the minimal polynomial of $\alpha^R$;
    
    \item Notice that, since $\phi_{T-a_i}=\psi_{T^R-a_i}$, we have $\phi[T-a_i]=\psi[T^R-a_i]$ and therefore $\phi[h] = \phi\left[\prod_{i=1}^{\ell}(T-a_i)\right] = \psi[H]$. Notice also that $\gcd(\mathfrak{p},h)=1$;
    
    \item Use $\phi,\; h, \;\mathfrak{p}$ and $K$ to construct the code exactly as outlined in Section~\ref{sec:CodeConstruction}.
\end{enumerate}

For simplicity, in the above construction the congruence condition guarantees the nice inclusion $\phi[H]\subseteq \vF_{q^m}$, and therefore $t=1$, which allows a simpler treatment. By using the full power of Lemma \ref{thm: frob carlitz} it is possible to control completely the degree of the minimal extension where $\phi[H]$ is defined.

We show an example of this construction in  subsection~\ref{sec:exampleCarlitz}.

\section{Examples}\label{sec: examples}
In this section we illustrate our construction with two examples, using Carlitz modules and supersingular Drinfeld modules. For simplicity we directly work with $q$-linearized polynomials instead of polynomials in $\tau$.

\subsection{Explicit Construction via a Carlitz module}\label{sec:exampleCarlitz}
In this example we show how to construct our code using a Carlitz module. Let's start by choosing the following parameters: $q=5, r=2, \delta=2, \ell=3, s=2$. Thus, note that $R=r+\delta-1=3$. Take then $a_1=1,a_2=2,a_3=3$ and construct the polynomial $H\in\vF_5[T]$ as:
\[H = (T^3-1)(T^3-2)(T^3-3).\]
Now choose
\[\mathfrak{q} = T^{10} + 4T^9 + 4T^7 + T^6 + T^4 + 4T^3 + 4T + 2\in \vF_5[T].\]
A simple check shows that $\mathfrak{q}$ is irreducible and $\mathfrak{q}\equiv 1\pmod h$ (in particular it results $\mathfrak{q} = (T+4)h +1$). Let now $\vF_5(\alpha)\cong \vF_{5^{10}}$ be the splitting field of $\mathfrak{q}$ over $\vF_5$ where $\alpha$ is a primitive element for the extension (with minimal polynomial $x^{10} + 3x^5 + 3x^4 + 2x^3 + 4x^2 + x + 2$). Obviously $m=[\vF_5(\alpha):\vF_5]=10$ and notice that $\ell R = 9\nmid 10=m$, which implies that this code is not covered by the construction in \cite{rankmetriccodes}.

Choose now $z\in\vF_5(\alpha)$ to be a root of $\mathfrak{q}$, for example
\[z = 3\alpha^8 + 2\alpha^7 + \alpha^5 + 4\alpha^4 + 3\alpha^3 + 2\alpha^2 + 3\alpha + 2\]
and construct a (monic) Carlitz module over $\vF_5(\alpha)$ with $\vF_q[T]$-characteristic $\mathfrak{q}$:
\[C_T(x) = zx+x^5.\]
Therefore, we define $\phi_T$ as follows
\[
\phi_T(x)=C_{T^3}(x) = C_T(C_T(C_T(x))) =x^{5^3}+(z^{25}+z^5+z)x^{5^2}+(z^{10}+z^6+z^2)x^5+z^3x .
\]
As we have seen in Section~\ref{sec:TwoCases}, the code constructed via $H$, $\mathfrak{q}$ and $C_T$ is the same as the one constructed using $h$, $\mathfrak{p}$ and $\phi_T$, where $h=(T-1)(T-2)(T-3)$ and $\mathfrak{p}$ is the $A$-characteristic of $\phi$ (namely the minimal polynomial of $z^3$).

Now we can construct $W_1,W_2,W_3$. In this example we have that each $W_{a_i}$ is a vector sub-space of $\vF_5(\alpha)$, each one of dimension $R=3$ and $\vF_q$-generated by the roots of $C_{T^3}(x)-a_ix$, namely $\Ker(\phi_T(x)-a_ix)$, for $i\in \{1,2,3\}$. Thus, each of them is given by
\[
    W_1 = \langle \beta_1^1,\beta_2^1,\beta_3^1 \rangle \quad W_2 = \langle \beta_1^2,\beta_2^2,\beta_3^2 \rangle \quad W_3 = \langle \beta_1^3,\beta_2^3,\beta_3^3 \rangle
\]
where we can choose
\begin{align*}
    \beta_1^1 &= 4\alpha^8 + 2\alpha^7 + 2\alpha^5 + 2\alpha^4 + 3\alpha^2 + 1,\\
    \beta_2^1 &= \alpha^9 + 3\alpha^8 + 4\alpha^7 + 2\alpha^6 + 2\alpha^5 + 2\alpha^4 + 4\alpha^2 + \alpha,\\
    \beta_3^1 &= 4\alpha^9 + 3\alpha^8 + 4\alpha^7 + 2\alpha^5 + \alpha^3,\\
    \beta_1^2 &= 2\alpha^9 + 3\alpha^8 + 4\alpha^7 + \alpha^6 + 3\alpha^5 + \alpha^4 + 3\alpha^3 + 1,\\
    \beta_2^2 &= 4\alpha^8 + \alpha^7 + 4\alpha^6 + 2\alpha^5 + 4\alpha^4 + \alpha^3 + \alpha,\\
    \beta_3^2 &= 4\alpha^8 + 3\alpha^7 + \alpha^6 + 4\alpha^3 + \alpha^2,\\
    \beta_1^3 &= 4\alpha^7 + 3\alpha^6 + \alpha^5 + 2\alpha^4 + 2\alpha^3 + 1,\\
    \beta_2^3 &= 4\alpha^9 + \alpha^8 + 4\alpha^7 + 4\alpha^6 + 2\alpha^4 + \alpha,\\
    \beta_3^3 &= \alpha^9 + 3\alpha^7 + 2\alpha^6 + 2\alpha^5 + \alpha^4 + 2\alpha^3 + \alpha^2. 
\end{align*}
The message space $\mathcal M$ for our code is
\[
\mathcal M=\left\{\sum^{2}_{i=0} g_i(\phi_T^i(x)) \mid g_i\in \vF_5(\alpha)[x], \deg_q(g_i)\leq 1\right\}.
\]
Now take for example $f\in\mathcal M$ such that $g_0=x + \alpha x^5$, $g_1=(\alpha + 1)x^5$ and $g_2=(\alpha^2 + \alpha)x$, namely:
\[f= g_0(x) + g_1(\phi_{T^3}(x)) + g_2(\phi_{T^3}(\phi_{T^3}(x))).\]
Then $\enc(f)=[f_1,f_2,f_3,f_4,f_5,f_6,f_7,f_8,f_9]$ where
\begin{align*}
    f_1 &=f(\beta_1^1) = 2\alpha^9 + \alpha^8 + 2\alpha^7 + \alpha^6 + 3\alpha^3 + 3\alpha^2 + 4\alpha,\\
    f_2 &=f(\beta_2^1) = 4\alpha^9 + 2\alpha^8 + 2\alpha^6 + 4\alpha^5 + 2\alpha^4 + 3\alpha^3 + 3\alpha^2 + 3\alpha + 1, \\
    f_3 &=f(\beta_3^1) = 3\alpha^9 + 3\alpha^8 + \alpha^7 + 3\alpha^6 + 2\alpha^5 + 3\alpha^4 + 3\alpha^2 + 1,\\
    f_4 &=f(\beta_1^2) = 3\alpha^9 + \alpha^8 + 2\alpha^6 + \alpha^5 + \alpha^4 + 2\alpha^2 + 3\alpha + 1,\\
    f_5 &=f(\beta_2^2) = 3\alpha^7 + 3\alpha^6 + 3\alpha^5 + 2\alpha^4 + 4\alpha^2 + 4\alpha + 2,\\
    f_6 &=f(\beta_3^2) = 3\alpha^9 + 4\alpha^8 + 3\alpha^7 + 3\alpha^6 + 3\alpha^5 + 4\alpha^4 + 1,\\
    f_7 &=f(\beta_1^3) = 4\alpha^8 + 2\alpha^7 + 2\alpha^6 + 3\alpha^2 + \alpha + 3,\\
    f_8 &=f(\beta_2^3) = 2\alpha^8 + 3\alpha^7 + 2\alpha^6 + 2\alpha^4 + \alpha^3 + 2\alpha^2 + 3,\\
    f_9 &=f(\beta_3^3) = 2\alpha^9 + 2\alpha^8 + 2\alpha^7 + 2\alpha^6 + 4\alpha^5 + \alpha^4 + 4\alpha^3 + \alpha^2 + \alpha + 1.
\end{align*}

Suppose now, for example, that $f_5$ is erased and we want to recover it. Since $W_2$ is a locality set, we can use $f_4$ and $f_6$ to construct a \textit{recovery} polynomial $f|_{W_2}$ of $q$-degree at most $1$. Such polynomial is
\[f|_{W_2} = (3\alpha + 2)x^5 + (4\alpha^2 + 4\alpha + 1)x \]
and it is easy to verify that $f|_{W_2}(\beta_2^2)=f_5$.

\subsection{Explicit Construction via a supersingular Drinfeld module}\label{sec:exampleSuper}
This example shows how to construct our code using a supersingular Drinfeld module. Choose the following parameters: $q=5, r=2, \delta=2, \ell=3, s=2$. Thus $R=r+\delta-1=3$. Take then $a_1=1,a_2=2,a_3=3$ and construct the polynomial $h\in\vF_5[T]$ as:
\[h = (T-1)(T-2)(T-3).\]
Now choose $\mathfrak{p}\in\vF_5[T]$ such as
\[\mathfrak{p} = T^4 + T^3 + 4T^2 + T + 4.\]
A simple check shows that $\mathfrak{p}$ is irreducible and $\mathfrak{p}\equiv 1\pmod h$. Now, unlike the construction where Carlitz module is used, here the resulting code will be a rank-metric code over $\vF_{q^{Rm}}$ and once again, since $\ell R=9\nmid 12 = Rm$, this code is not covered by the construction in \cite{rankmetriccodes}.
Thus, consider $\vF_5(\alpha)\cong \vF_{5^{12}}$ where $\alpha$ is a primitive element for the extension (with minimal polynomial $x^{12} + x^7 + x^6 + 4x^4 + 4x^3 + 3x^2 + 2x + 2$).
Choose $z\in\vF_5(\alpha)$ to be a root of $\mathfrak{p}$, for example
\[z = 2\alpha^{10} + 2\alpha^9 + 3\alpha^8 + 2\alpha^7 + 4\alpha^4 + 3\alpha^2 + 4\alpha\]
and construct a supersingular Drifeld module over $\vF_5(\alpha)$ with $\vF_q[T]$-characteristic $\mathfrak{p}$, for example:
\[\phi_T(x) = zx+x^{5^3}.\]
To verify that such Drinfeld module is supersingular, just note that $3=R\nmid m=4$ (see \cite[Example 4.4.5]{Papikian2023}).
Now we can construct $W_1,W_2,W_3$. As before, we have that each $W_{a_i}$ is a vector sub-space of $\vF_5(\alpha)$, each one of dimension $R=3$ and $\vF_q$-generated by the roots of $\phi_{T}(x)-a_ix$, namely $\Ker(\phi_{T}(x)-a_ix)$, for $i\in \{1,2,3\}$. This means each of them is given by
\[
    W_1 = \langle \beta_1^1,\beta_2^1,\beta_3^1 \rangle \quad W_2 = \langle \beta_1^2,\beta_2^2,\beta_3^2 \rangle \quad W_3 = \langle \beta_1^3,\beta_2^3,\beta_3^3 \rangle
\]
where we can choose
\begin{align*}
    \beta_1^1 &= 3\alpha^{10} + 2\alpha^9 + 2\alpha^8 + 3\alpha^7 + \alpha^6 + 2\alpha^5 + 4\alpha^4 + 4\alpha^3 + 1,\\
    \beta_2^1 &= 3\alpha^{11} + 2\alpha^9 + 3\alpha^8 + 3\alpha^7 + 4\alpha^6 + 3\alpha^5 + 4\alpha^3 + \alpha,\\
    \beta_3^1 &= 2\alpha^8 + 3\alpha^7 + 2\alpha^6 + 2\alpha^4 + \alpha^3 + \alpha^2 ,\\
    \beta_1^2 &= 4\alpha^{11} + \alpha^{10} + 3\alpha^6 + \alpha^5 + 4\alpha^4 + \alpha^3 + 1 ,\\
    \beta_2^2 &= 4\alpha^{10} + 3\alpha^9 + 3\alpha^7 + 2\alpha^6 + \alpha^4 + \alpha^3 + \alpha,\\
    \beta_3^2 &= 3\alpha^{11} + 3\alpha^{10} + 4\alpha^8 + 4\alpha^7 + 4\alpha^6 + 2\alpha^5 + 4\alpha^4 + \alpha^3 + \alpha^2,\\
    \beta_1^3 &= 3\alpha^{11} + \alpha^{10} + \alpha^9 + 4\alpha^8 + 3\alpha^7 + 2\alpha^6 + 3\alpha^5 + 4\alpha^3 + 3\alpha^2 + 1,\\
    \beta_2^3 &= \alpha^{11} + \alpha^{10} + \alpha^9 + 2\alpha^8 + 2\alpha^7 + 3\alpha^5 + 2\alpha^3 + \alpha,\\
    \beta_3^3 &= 2\alpha^{11} + 2\alpha^{10} + 2\alpha^9 + \alpha^8 + 2\alpha^7 + 3\alpha^5 + \alpha^4. 
\end{align*}
The message space $\mathcal M$ for our code is
\[
\mathcal M=\left\{\sum^{2}_{i=0} g_i(\phi_T^i(x)) \mid g_i\in \vF_5(\alpha)[x], \deg_q(g_i)\leq 1\right\}.
\]
Now take for example $f\in\mathcal M$ such that $g_0=x + \alpha x^5$, $g_1=x^5$ and $g_2=\alpha^2 x$, namely:
\[f= g_0(x) + g_1(\phi_T(x)) + g_2(\phi_T(\phi_T(x))).\]
Then $\enc(f)=[f_1,f_2,f_3,f_4,f_5,f_6,f_7,f_8,f_9]$ where
\begin{align*}
    f_1 &=f(\beta_1^1) = 4\alpha^{11} + \alpha^{10} + 2\alpha^8 + 4\alpha^7 + 2\alpha^6 + 3\alpha^5 + 4\alpha^4 + 3\alpha^3 + 2\alpha^2 + 4 ,\\
    f_2 &=f(\beta_2^1) = 4\alpha^{11} + 4\alpha^{10} + 4\alpha^9 + 3\alpha^8 + 2\alpha^7 + 3\alpha^6 + 4\alpha^5 + 4\alpha^2 + 4\alpha , \\
    f_3 &=f(\beta_3^1) = 2\alpha^{10} + 4\alpha^9 + 4\alpha^8 + 3\alpha^7 + 2\alpha^5 + 2\alpha^4 + 2\alpha^3 + 4\alpha^2 + \alpha + 1 ,\\
    f_4 &=f(\beta_1^2) = 4\alpha^{11} + 2\alpha^{10} + 2\alpha^9 + 3\alpha^8 + 2\alpha^7 + \alpha^5 + 2\alpha^4 + 2\alpha^2 + \alpha,\\
    f_5 &=f(\beta_2^2) = 4\alpha^{11} + 2\alpha^9 + 4\alpha^8 + 4\alpha^7 + 3\alpha^6 + \alpha^5 + \alpha^4 + 4\alpha^2 + 4\alpha + 1,\\
    f_6 &=f(\beta_3^2) = 4\alpha^{11} + 4\alpha^{10} + 3\alpha^9 + \alpha^8 + 4\alpha^6 + 4\alpha^5 + 3\alpha^4 + \alpha^3 + 3\alpha^2 + 4,\\
    f_7 &=f(\beta_1^3) = 2\alpha^{11} + 3\alpha^{10} + 2\alpha^9 + \alpha^8 + \alpha^6 + \alpha^4 + \alpha^3 + \alpha^2 + 3\alpha + 4,\\
    f_8 &=f(\beta_2^3) = 2\alpha^{11} + \alpha^{10} + \alpha^9 + 2\alpha^8 + 2\alpha^7 + 4\alpha^4 + \alpha^3 + 2\alpha^2 + 2,\\
    f_9 &=f(\beta_3^3) =2\alpha^{11} + \alpha^9 + 3\alpha^7 + 3\alpha^6 + 2\alpha^5 + 4\alpha^4 + 4\alpha^3 + 4\alpha^2 .
\end{align*}

As in the previous example, suppose that $f_5$ is erased and we want to recover it. Since $W_2$ is a locality set, we can use $f_4$ and $f_6$ to construct a \textit{recovery} polynomial $f|_{W_2}$ of $q$-degree at most $1$. Such polynomial is
\[f|_{W_2} = (\alpha + 2)x^5 + (4\alpha^2 + 1)x \]
and it is easy to verify that $f|_{W_2}(\beta_2^2)=f_5$.

\subsection{Choosing the polynomial $\mathfrak{p}$}\label{sec:ExampleOfP}
In these examples, a fundamental role is played by the polynomial $\mathfrak{p}$ (and the attached Drinfeld module for which $\vF_{q^m}(\phi[h])\subseteq K = \vF_{q^{tm}}$) that, a priori, might not exist. In Section \ref{sec:infinfam} we show an infinite family of parameters for which $\mathfrak{p}$ (or $\mathfrak{q}$ in the case of rank-1 Drinfeld modules) exists. However, such $\mathfrak{p}$ can be found also for many other choices of parameters that are not necessarily prescribed by Section \ref{sec:infinfam}. For example, in the Table~\ref{tablePs} we have reported some choices for $\mathfrak{q}=uh+1$ ($K=\vF_{q^m}$), that allow to construct our code using a Carlitz module.

\renewcommand{\arraystretch}{1.5} 
\counterwithout{table}{section}
\begin{table}[!h]
\caption{}
\label{tablePs}
\begin{center}
\addtolength{\leftskip} {-2.15cm}
\begin{tabular}{ |>{\centering\arraybackslash}m{.3cm} | >{\centering\arraybackslash}m{.3cm} | >{\centering\arraybackslash}m{.3cm} | >{\centering\arraybackslash}m{3.5cm} | >{\centering\arraybackslash}m{.3cm} | >{\centering\arraybackslash}m{2cm} | >{\centering\arraybackslash}m{8cm}| }
	\hline
 	$q$ & $R$ & $\ell$ & $H=\prod_{i=1}^\ell(T^R-a_i)$ & $m$ & $u$ & $\mathfrak{q}=uh+1$ \\
    
    \hline\hline
	\multirow{5}{*}{5} & \multirow{5}{*}{3} & \multirow{5}{*}{2} & \multirow{4}{*}{$T^6 + 2T^3 + 2$} & 6 & $1$ & $T^6 + 2T^3 + 3$ \\
	\cline{5-7}
 	&&&& 7 & $T + 3$ & $T^7 + 3T^6 + 2T^4 + T^3 + 2T + 2$ \\
 	\cline{5-7}
 	&&&& 8 & $T^2 + 4$ & $T^8 + 4T^6 + 2T^5 + 3T^3 + 2T^2 + 4$ \\
 	\cline{5-7}
 	&&& $(a_1,a_2)=(1,2)$ & 9 & $T^3 + T + 3$ & $T^9 + T^7 + 2T^4 + 3T^3 + 2T + 2$ \\
 	\cline{5-7}
 	&&&& 10 & $T^4 + 4$ & $T^10 + 2T^7 + 4T^6 + 2T^4 + 3T^3 + 4$ \\
    
    \hline\hline
    
	\multirow{5}{*}{5} & \multirow{5}{*}{3} & \multirow{5}{*}{3} & \multirow{4}{*}{$T^9 + 4T^6 + T^3 + 4$} & 10 & $T + 2$ & $T^{10} + 2T^9 + 4T^7 + 3T^6 + T^4 + 2T^3 + 4T + 4$ \\
	\cline{5-7}
 	&&&& 11 & $T^2 + T$ & $T^{11} + T^{10} + 4T^8 + 4T^7 + T^5 + T^4 + 4T^2 + 4T + 1$ \\
 	\cline{5-7}
 	&&&& 12 & $T^3 + 2$ & $T^{12} + T^9 + 4T^6 + T^3 + 4$ \\
 	\cline{5-7}
 	&&& $(a_1,a_2,a_3)=(1,2,3)$ & 13 & $T^4 + 2T + 4$ & $T^{13} + T^{10} + 4T^9 + 4T^7 + T^6 + T^4 + 4T^3 + 3T + 2$ \\
 	\cline{5-7}
 	&&&& 14 & $T^5 + 4T$ & $T^{14} + 4T^{11} + 4T^{10} + T^8 + T^7 + 4T^5 + 4T^4 + T + 1$ \\
 	
 	\hline\hline
 	 
 	\multirow{5}{*}{7} & \multirow{5}{*}{3} & \multirow{5}{*}{2} & \multirow{4}{*}{$T^6 + 4T^3 + 2$} & 7 & $T$ & $T^7 + 4T^4 + 2T + 1$ \\
	\cline{5-7}
 	&&&& 8 & $T^2 + 1$ & $T^8 + T^6 + 4T^5 + 4T^3 + 2T^2 + 3$ \\
 	\cline{5-7}
 	&&&& 9 & $T^3 + 1$ & $T^9 + 5T^6 + 6T^3 + 3$ \\
 	\cline{5-7}
 	&&& $(a_1,a_2)=(1,2)$ & 10 & $T^4 + T + 5$ & $T^{10} + 5T^7 + 5T^6 + 6T^4 + 6T^3 + 2T + 4$ \\
 	\cline{5-7}
 	&&&& 11 & $T^5+6$ & $T^{11} + 4T^8 + 6T^6 + 2T^5 + 3T^3 + 6$ \\
 	
 	\hline\hline
 	
 	\multirow{5}{*}{8} & \multirow{5}{*}{3} & \multirow{5}{*}{2} & \multirow{4}{*}{$T^6 + (\alpha + 1)T^3 + \alpha$} & 7 & $T + \alpha^2 + \alpha$ & $T^7 + (\alpha^2 + \alpha)T^6 + (\alpha + 1)T^4 + T^3 + \alpha T + \alpha^2 + \alpha$ \\
	\cline{5-7}
 	&&&& 8 & $T^2$ & $T^8 + (\alpha + 1)T^5 + \alpha T^2 + 1$ \\
 	\cline{5-7}
 	&&&& 9 & $T^3 + \alpha T$ & $T^9 + \alpha T^7 + (\alpha + 1)T^6 + (\alpha^2 + \alpha)T^4 + \alpha T^3 + \alpha^2T + 1$ \\
 	\cline{5-7}
 	&&& $(a_1,a_2)=(1,\alpha)$ & 10 & $T^4$ & $T^{10} + (\alpha + 1)T^7 + \alpha T^4 + 1$ \\
 	\cline{5-7}
 	&&&& 11 & $T^5$ & $T^{11} + (\alpha + 1)T^8 + \alpha T^5 + 1$ \\
 	
 	\hline
 	
\end{tabular}
\end{center}
\end{table}

\section{An Infinite Family of Optimal Rank Metric Codes with Locality}\label{sec:infinfam}

In this section we will prove Theorem \ref{thm:GeneralMainThm} in the case of rank-1 and supersingular Drinfeld modules as a corollary of Section~\ref{sec:CodeConstruction}, Section~\ref{sec:TwoCases} and Dirichlet Theorem for polynomial arithmetic progressions.
We follow Rosen's  line of work \cite{Rosen2002} to extract the information needed to guarantee the existence of a polynomial $\mathfrak{p}$ of the wanted form.
One of our tasks will be to precisely compute some error terms in \cite[Chapter 4]{Rosen2002}  (which is interesting on its own right, since effective versions of these results are difficult to find in the literature).

Throughout this section let $\mathbb C$ be the field of complex numbers, $q$ be a prime power, $A=\vF_q[T]$ be the ring of polynomials over the finite field with $q$ elements, and $h$ be a fixed element of $A$ of positive degree. Unless otherwise specified, we write $P$ to denote a monic irreducible polynomial in $A$, so a sum whose index is $P$ is meant to run only over irreducible polynomials.

\subsection{$L$-series and Dirichlet Theorem}
\begin{theorem}\label{appx:cardinalitySm}
    Let $S_m$ denote the set of monic irreducible polynomials in $A$ of degree $m$. Then, \[ \left| \#S_m-\frac{q^m}{m}\right|\leq 2\frac{q^{\frac{m}{2}}}{m}. \]
\end{theorem}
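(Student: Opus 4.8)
The plan is to reduce the statement to the exact classical count of monic irreducibles over $\vF_q$ and then estimate the error term in that count. The starting point is the well-known separable factorization
\[ x^{q^m} - x \;=\; \prod_{d \mid m}\ \prod_{\substack{P \text{ monic irreducible}\\ \deg P = d}} P(x) \]
in $\vF_q[x]$, which expresses $x^{q^m}-x$ as the product of all monic irreducible polynomials whose degree divides $m$, each occurring with multiplicity one (separability coming from the derivative being $-1$). Comparing degrees on the two sides immediately gives the counting identity $q^m = \sum_{d \mid m} d\,\#S_d$; this is the Prime Polynomial Theorem in exact form, and it also follows from the rationality of the zeta function $\zeta_A(u)=(1-qu)^{-1}$ together with its Euler product $\prod_P(1-u^{\deg P})^{-1}$, which is the $L$-series viewpoint of the surrounding subsection.

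Next I would isolate $\#S_m$ by Möbius inversion over the divisor poset of $m$: from $q^m = \sum_{d\mid m} d\,\#S_d$ one obtains $m\,\#S_m = \sum_{d\mid m}\mu(m/d)\,q^d$, and separating off the top term $d=m$ gives
\[ m\,\#S_m \;=\; q^m \;+\; \sum_{\substack{d\mid m\\ d<m}}\mu(m/d)\,q^d. \]

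Finally I would bound the error. Every proper divisor $d$ of $m$ satisfies $d\le m/2$, so using $|\mu|\le 1$ and summing a geometric series,
\[ \bigl| m\,\#S_m - q^m\bigr| \;\le\; \sum_{\substack{d\mid m\\ d<m}} q^d \;\le\; \sum_{j=1}^{\lfloor m/2\rfloor} q^j \;=\; \frac{q^{\lfloor m/2\rfloor+1}-q}{q-1} \;\le\; \frac{q}{q-1}\,q^{m/2} \;\le\; 2\,q^{m/2}, \]
where the last inequality uses $q\ge 2$. Dividing by $m$ yields $\bigl|\#S_m - q^m/m\bigr|\le 2q^{m/2}/m$, as claimed. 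There is no genuine obstacle in this argument; the only point that needs a little care is the final chain of inequalities — keeping the constant exactly $2$ and checking the degenerate cases where the error sum is short or empty (e.g.\ $m$ prime, or $m=1$).
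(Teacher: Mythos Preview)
Your argument is correct: the factorization of $x^{q^m}-x$, the M\"obius inversion, and the geometric-series bound on the error term are all standard and yield exactly the constant~$2$. The paper does not supply its own proof but simply cites \cite[Theorem~2.2]{Rosen2002}, whose proof is precisely the M\"obius-inversion argument you wrote out; so your proposal is essentially the same approach, just made explicit.
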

\begin{proof}
    It follows directly from \cite[Theorem 2.2]{Rosen2002}.
\end{proof}

\begin{definition}\label{def:character}
    A \emph{Dirichlet character} modulo $h$ is a function $\chi:A\rightarrow \mathbb{C}$ such that:
    \begin{itemize}
        \item[(a)] $\chi(a+bh)=\chi(a)$ for all $a,b\in A$
        \item[(b)] $\chi(a)\chi(b)=\chi(ab)$ for all $a,b\in A$
        \item[(c)] $\chi(a)\neq 0$ if and only if $\gcd(a,h)=1$
    \end{itemize}
\end{definition}

From part (b), it follows immediately that $\chi(1)=1$. Generally, it can be shown that the value of a character is either zero or a root of unity, and so $|\chi(a)|\in\{0,1\}$ for every $a\in A$.

Let $\Phi(h)$ be the number of non-zero polynomials of degree less than $\deg(h)$ and relatively prime to $h$. It can be shown that there are exactly $\Phi(h)$ Dirichlet characters modulo $h$ (see \cite[Chapter 4]{Rosen2002}).
\begin{definition}\label{chitrivial}
  For every $a\in A$, the \emph{trivial} Dirichlet character modulo $h$ is denoted by $\chi_o$ and defined by $\chi_o(a)=1$ if $\gcd(a,h)=1$, $\chi_o(a)=0$ if $\gcd(a,h)\neq 1$. 
\end{definition}
Let $X_h$ denote the set of all the Dirichlet characters modulo $h$.  For every $\chi,\psi\in X_h$, we can define the product by $\chi\psi(a) \coloneqq \chi(a)\psi(a)$ and the inverse by $\chi^{-1}(a)\coloneqq \chi(a)^{-1}$. With these operations, $X_h$ can be seen as a group with identity element $\chi_o$. Finally, since characters are complex-value functions, we write $\overline{\chi(a)}$ or $\bar{\chi}(a)$ for the complex conjugate of $\chi(a)$. Next, we outline a result about orthogonality relations between characters. 

\begin{proposition}\cite[Proposition 4.2]{Rosen2002}\label{appx:orthogonalityr}
    Let $\chi,\psi$ be Dirichlet characters modulo $h$ and $a,b\in A$ that are coprime to $h$. Let $\delta$ be the Kronecker delta. Then, 
    \begin{itemize}
        \item [(1)] $\sum_a\chi(a)\overline{\psi(a)}=\Phi(h)\delta(\phi,\psi)$
        \item [(2)] $\sum_\chi\chi(a)\overline{\chi(b)}=\Phi(h)\delta(a,b)$
    \end{itemize}
    The  sum in (1) is over any set of representative for $\faktor{A}{hA}$ and the sum in (2) is over all Dirichlet characters modulo $h$.
\end{proposition}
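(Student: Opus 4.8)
The plan is to recognize these as the standard orthogonality relations for the character group of the finite abelian group $G:=(\faktor{A}{hA})^{*}$, whose order is $\Phi(h)$. First I would note that a Dirichlet character modulo $h$ restricts to a group homomorphism $G\to\mathbb C^{*}$ (its values being roots of unity, by the remark following Definition~\ref{def:character}), and that conversely every such homomorphism extends uniquely to a Dirichlet character by declaring it to be $0$ on the classes that are not coprime to $h$; this identifies the set $X_h$ with $\widehat G:=\mathrm{Hom}(G,\mathbb C^{*})$. I will use two structural facts: $\#\widehat G=\#G=\Phi(h)$ (already quoted in the text from \cite[Chapter 4]{Rosen2002}), and that characters separate points of $G$, i.e.\ for every $c\in G$ with $c\neq 1$ there is $\chi\in\widehat G$ with $\chi(c)\neq 1$. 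Both follow at once from the decomposition of $G$ into a product of cyclic groups: on a cyclic factor $\Z/n\Z$ the homomorphisms to $\mathbb C^{*}$ are exactly the $n$ maps sending a generator to an $n$-th root of unity, and the one sending it to a primitive root is faithful.

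For part (1), fix $\chi,\psi\in X_h$. Since $\chi(a)=\psi(a)=0$ whenever $\gcd(a,h)\neq 1$, the sum over a set of representatives of $\faktor{A}{hA}$ collapses to $S:=\sum_{g\in G}\chi(g)\overline{\psi(g)}=\sum_{g\in G}\eta(g)$, where $\eta:=\chi\psi^{-1}\in\widehat G$ (using multiplicativity and $\overline{\psi(g)}=\psi(g)^{-1}$). If $\chi=\psi$, then $\eta$ is trivial on $G$ and $S=\#G=\Phi(h)$. If $\chi\neq\psi$, then $\eta$ is nontrivial, so I can pick $g_0\in G$ with $\eta(g_0)\neq 1$; since $g\mapsto g_0 g$ permutes $G$ bijectively, $\eta(g_0)S=\sum_{g\in G}\eta(g_0 g)=S$, whence $(\eta(g_0)-1)S=0$ and $S=0$. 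This is exactly $S=\Phi(h)\,\delta(\chi,\psi)$.

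For part (2), fix $a,b\in A$ coprime to $h$, choose $b'\in A$ with $b'b\equiv 1\pmod h$, and let $c\in G$ be the class of $ab'$. Using multiplicativity and $\overline{\chi(b)}=\chi(b)^{-1}=\chi(b')$, one gets $\sum_{\chi}\chi(a)\overline{\chi(b)}=\sum_{\chi\in\widehat G}\chi(c)$. If $a\equiv b\pmod h$, i.e.\ $c=1$, every term is $1$ and the sum equals $\#\widehat G=\Phi(h)$. If $a\not\equiv b\pmod h$, then $c\neq 1$, so by the separation property there is $\chi_0\in\widehat G$ with $\chi_0(c)\neq 1$; since $\chi\mapsto\chi_0\chi$ permutes $\widehat G$ bijectively, $\chi_0(c)\sum_{\chi}\chi(c)=\sum_{\chi}(\chi_0\chi)(c)=\sum_{\chi}\chi(c)$, which forces the sum to vanish. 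Hence the sum equals $\Phi(h)\,\delta(a,b)$.

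The only non-formal inputs are $\#\widehat G=\#G$ and the separation of points, both of which reduce to the structure theorem for finite abelian groups as indicated above; everything else is the routine averaging-and-reindexing trick applied once in $G$ for (1) and once in $\widehat G$ for (2). So I do not expect a genuine obstacle here — the main care is simply in the reduction of the sum in (1) from a set of representatives of $\faktor{A}{hA}$ down to $G$, and in keeping the two dual directions straight.
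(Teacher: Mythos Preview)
Your argument is correct and is the standard textbook proof of the orthogonality relations for characters of a finite abelian group, applied to $G=(A/hA)^{*}$. The paper does not give its own proof of this proposition at all; it simply imports the statement from \cite[Proposition 4.2]{Rosen2002}, so there is nothing to compare against beyond noting that your reduction to $\widehat G$ and the reindexing trick is exactly what one finds in Rosen.
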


From now on, for every $f\in A$, let $|f|=q^{\deg(f)}$  and $s$ be a complex variable with real part greater than 1.

\begin{definition}\label{def:lserieschi}
    Let $\chi$ be a Dirichlet character modulo $h$. The \emph{Dirichlet L-series} corresponding to $\chi$ is defined by \[L(s,\chi) = \sum_{f\text{ monic}}\frac{\chi(f)}{|f|^s}.\]
\end{definition}

\begin{proposition}[Euler product]\label{appx:eulerproduct}
    The following equality holds
    \begin{equation}\label{eq:seriesX}
        L(s,\chi) = \prod_{P}\left( 1-\frac{\chi(P)}{|P|^s} \right)^{-1}.
    \end{equation}
    In particular, for the $L$-series corresponding to the trivial character we have
    \begin{equation}\label{eq:seriesX0}
        L(s,\chi_o) = \prod_{P|h}\left( 1-\frac{1}{|P|^s}\right)\zeta_A(s)
    \end{equation}
    where 
    \begin{equation}\label{eq:zeta}
        \zeta_A(s) = \sum_{\substack{f\in A \\ f\text{ monic}}}\frac{1}{|f|^s} = \frac{1}{1-q^{1-s}}
    \end{equation}
    is the Riemann Zeta function for $A$.
\end{proposition}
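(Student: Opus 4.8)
The plan is to leverage unique factorization in $A=\vF_q[T]$ together with the complete multiplicativity of $\chi$ (Definition~\ref{def:character}(b)--(c)) and of the absolute value $|fg|=|f|\,|g|$, exactly as in the classical proof of the Euler product for Dirichlet $L$-functions over $\Z$. First I would establish absolute convergence of $L(s,\chi)$ for $\Re(s)>1$: since $|\chi(f)|\le 1$ for every $f$ and there are exactly $q^{n}$ monic polynomials of degree $n$ in $A$,
\[
\sum_{f\text{ monic}}\frac{|\chi(f)|}{|f|^{\Re(s)}}\ \le\ \sum_{n\ge 0}\frac{q^{n}}{q^{n\Re(s)}}\ =\ \sum_{n\ge 0}q^{\,n(1-\Re(s))}\ <\ \infty,
\]
the last being geometric of ratio $q^{\,1-\Re(s)}<1$. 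Running the same computation as an equality with every $\chi(f)$ replaced by $1$ (i.e. dropping the coprimality restriction) gives $\zeta_A(s)=\sum_{n\ge 0}q^{\,n(1-s)}=(1-q^{1-s})^{-1}$, which is \eqref{eq:zeta}.

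To prove \eqref{eq:seriesX}, note that for each monic irreducible $P$ we have $|\chi(P)|\,|P|^{-\Re(s)}\le |P|^{-\Re(s)}<1$, so
\[
\Bigl(1-\frac{\chi(P)}{|P|^{s}}\Bigr)^{-1}=\sum_{k\ge 0}\frac{\chi(P)^{k}}{|P|^{ks}}=\sum_{k\ge 0}\frac{\chi(P^{k})}{|P^{k}|^{s}},
\]
using $\chi(P^{k})=\chi(P)^{k}$ and $|P^{k}|=|P|^{k}$. Fix $N$ and multiply these absolutely convergent series over the finitely many monic irreducibles of degree $\le N$; expanding and using multiplicativity of $\chi$ and $|\cdot|$, the product equals $\sum_{f}\chi(f)/|f|^{s}$ where $f$ runs over the monic polynomials all of whose irreducible factors have degree $\le N$, each occurring exactly once by unique factorization in $A$. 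As $N\to\infty$ this index set exhausts all monic polynomials, and the difference between the partial product and $L(s,\chi)$ is bounded by $\sum_{\deg f>N}|f|^{-\Re(s)}$, which tends to $0$ by the previous estimate; hence the partial products converge to both sides of \eqref{eq:seriesX}.

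Finally, for \eqref{eq:seriesX0} I would specialize to $\chi=\chi_{o}$: by Definition~\ref{chitrivial}, $\chi_{o}(P)=1$ if $P\nmid h$ and $\chi_{o}(P)=0$ if $P\mid h$, so the factors with $P\mid h$ in \eqref{eq:seriesX} are trivial and
\[
L(s,\chi_{o})=\prod_{P\nmid h}\Bigl(1-\frac{1}{|P|^{s}}\Bigr)^{-1}=\Bigl(\prod_{P\mid h}\bigl(1-|P|^{-s}\bigr)\Bigr)\prod_{P}\Bigl(1-\frac{1}{|P|^{s}}\Bigr)^{-1};
\]
the last product is $\zeta_A(s)$ (this is \eqref{eq:seriesX} for the constant character $1$, equivalently the argument above with all $\chi(f)=1$), and combining with $\zeta_A(s)=(1-q^{1-s})^{-1}$ yields \eqref{eq:seriesX0}. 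The only step needing care is the passage from finite sub-products to the full Euler product --- verifying that monic polynomials with prime factors of bounded degree exhaust all monic polynomials and that the tail vanishes --- and both reduce to the geometric-series estimate of the first paragraph; everything else is immediate from unique factorization in $\vF_q[T]$ and the complete multiplicativity of $\chi$ and $|\cdot|$.
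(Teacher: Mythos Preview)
Your proof is correct and follows the same approach as the paper, which simply cites \cite[Chapter~4]{Rosen2002} and notes that the argument amounts to combining the definition of the $L$-series with the Euler product form $\zeta_A(s)=\prod_P(1-|P|^{-s})^{-1}$. You have fleshed out exactly that standard argument via unique factorization and complete multiplicativity.
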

\begin{proof}
    See \cite[Chapter 4]{Rosen2002}. The proof follows from the definition of L-series and comparison with the properties of $\zeta_A(s)$, written in the product form \[\zeta_A(s)=\prod_P\left( 1-\frac{1}{|P|^s} \right)^{-1}.\]
\end{proof}
Notice that, from part (c) of Definition \ref{def:character}, we can rearrange Equation~\eqref{eq:seriesX} as follows:
\begin{align}\label{eq:pnoth}
    \begin{split}
        L(s,\chi) &= \prod_{P}\left( 1-\frac{\chi(P)}{|P|^s} \right)^{-1} = \prod_{P\nmid h}\left( 1-\frac{\chi(P)}{|P|^s} \right)^{-1}\prod_{P|h}\left( 1-\frac{\chi(P)}{|P|^s} \right)^{-1} \\
        &= \prod_{P\nmid h}\left( 1-\frac{\chi(P)}{|P|^s} \right)^{-1}.
    \end{split}
\end{align}

\begin{proposition}\cite[Propostion 4.3]{Rosen2002}\label{appx:seriesproduct}
    Let $\chi$ be a non-trivial Dirichlet character modulo $h$. Then, $L(s,\chi)$ is a polynomial in $q^{-s}$ of degree at most $\deg(h)-1$.
\end{proposition}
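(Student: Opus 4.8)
The plan is to expand $L(s,\chi)$ as a power series in $u\coloneqq q^{-s}$ and to show that the coefficient of $u^n$ vanishes for every $n\geq \deg(h)$. Grouping the monic polynomials in Definition~\ref{def:lserieschi} by their degree, one writes, in the region of absolute convergence,
\[
L(s,\chi)=\sum_{n=0}^{\infty}c_n\,u^n,\qquad c_n\coloneqq\sum_{\substack{f\in A,\ f\text{ monic}\\ \deg(f)=n}}\chi(f),
\]
so it suffices to prove $c_n=0$ for all $n\geq\deg(h)$; the remaining finite sum then exhibits $L(s,\chi)$ as a polynomial in $u$ of degree at most $\deg(h)-1$.

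The main step is an equidistribution observation: for $n\geq\deg(h)$, every residue class modulo $h$ is represented by exactly $q^{\,n-\deg(h)}$ monic polynomials of degree $n$. Indeed, fixing a representative $r$ with $\deg(r)<\deg(h)$, a monic polynomial $f$ of degree $n$ with $f\equiv r\pmod{h}$ is uniquely of the form $f=gh+r$, and the requirement that $f$ be monic of degree $n$ forces the degree and leading coefficient of $g$, leaving precisely $q^{\,n-\deg(h)}$ admissible $g$. Using the periodicity of $\chi$ (part (a) of Definition~\ref{def:character}), this yields
\[
c_n=\sum_{r\bmod h}\chi(r)\cdot\#\{\,f:f\text{ monic},\ \deg(f)=n,\ f\equiv r\pmod{h}\,\}=q^{\,n-\deg(h)}\sum_{r\bmod h}\chi(r).
\]

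It remains to note that the character sum $\sum_{r\bmod h}\chi(r)$ vanishes because $\chi$ is non-trivial: taking $\psi=\chi_o$ in Proposition~\ref{appx:orthogonalityr}(1) and using that $\overline{\chi_o(r)}$ equals $1$ when $\gcd(r,h)=1$ and $0$ otherwise (while $\chi(r)=0$ in the latter case as well), one gets $\sum_{r\bmod h}\chi(r)=\sum_{r\bmod h}\chi(r)\overline{\chi_o(r)}=\Phi(h)\,\delta(\chi,\chi_o)=0$. Hence $c_n=0$ for all $n\geq\deg(h)$, and $L(s,\chi)=\sum_{n=0}^{\deg(h)-1}c_n u^n$ is a polynomial in $q^{-s}$ of degree at most $\deg(h)-1$, as claimed. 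The only delicate point is the equidistribution claim — it genuinely fails for $n<\deg(h)$, which is precisely why the bound $\deg(h)-1$ (and nothing smaller) appears; everything else is direct manipulation of the definitions together with the orthogonality relation already recorded in Proposition~\ref{appx:orthogonalityr}.
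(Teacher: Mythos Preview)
Your proof is correct and is precisely the standard argument (the one Rosen gives in the cited reference): group the defining sum by degree, use that for $n\geq\deg(h)$ the monic polynomials of degree $n$ are equidistributed among residue classes modulo $h$, and then kill the resulting full character sum via orthogonality. The paper itself does not supply a proof but simply cites \cite[Proposition~4.3]{Rosen2002}, so there is nothing further to compare.
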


In what follows, we will need the result below. To prove it, we adjust the proof of \cite[Theorem 4.8]{Rosen2002} to our special case in order to extract explicit constants.

\begin{theorem}[Explicit Dirichlet Theorem]\label{thm:Dirichlet}
Let $h\in A$ and let $S_m$ be the set of monic irreducible polynomials in $A$ of degree $m$, $T(m)$ be the number of divisors of $m$. For $a\in \left(\faktor{A}{hA}\right)^\times$, let $S_m(a,h):=\{P\in S_m: P\equiv a\mod h\}$. Then \[ \left| \#S_m(a,h)-\frac{q^m}{m\Phi(h)}\right| \leq \left(3T(m)+2\deg(h)\right)\frac{q^{\frac{m}{2}}}{m}.\]
\end{theorem}
\begin{proof}
Set $H=\deg(h)$ and consider the $L$-series $L(s,\chi)$. We want to express it as a product in two different ways and compare the coefficients. By Proposition \ref{appx:seriesproduct}, we know that $L(s,\chi)$ is a polynomial in $q^{-s}$ of degree at most $H-1$. Therefore, setting $u=q^{-s}$ we can write
\begin{equation}\label{eq:L1}
    L^*(u,\chi)\coloneqq \sum_{k=0}^{H-1}a_k(\chi)u^k = \prod_{i=1}^{H-1}(1-\alpha_i(\chi)u)
\end{equation}
where $\alpha_i(\chi)$ is the inverse of a complex root of $L^*(u,\chi)$ for every $i\in\{1,\ldots, H-1\}$. The second expression for $L^*(u,\chi)$ comes from rewriting the Euler product for $L(s,\chi)$ in terms of $u$. Using Equation \eqref{eq:pnoth} and reordering the factors we have
\[L(s,\chi) = \prod_{P\nmid h}\left( 1-\frac{\chi(P)}{|P|^s} \right)^{-1} = \prod_{d=1}^\infty \prod_{\substack{P\nmid h \\ \deg(P)=d}}(1-\chi(P)q^{-ds})^{-1}\] and setting again $u=q^{-s}$ we get
\begin{equation}\label{eq:L2}
    L^*(u,\chi) = \prod_{d=1}^\infty \prod_{\substack{P\nmid h \\ \deg(P)=d}}(1-\chi(P)u^d)^{-1}.
\end{equation}
The idea of the proof is to take the logarithmic derivative of both sides and compare the coefficients. We now state an identity for the logarithmic derivative which will be used frequently.
\begin{equation}\label{eq:log}
    u\frac{d}{du}(\log(1-\alpha u)^{-1}) = \sum_{m=1}^\infty \alpha^m u^m
\end{equation}
where $\alpha$ is a complex number. The proof is trivial using geometric series for $|u|<|\alpha|^{-1}$. Now, for each character $\chi$ modulo $h$ we define the number $c_m(\chi)$ such that \[u\frac{d}{du}\log(L^*(u,\chi)) = \sum_{m=1}^\infty c_m(\chi)u^m.\] We have two cases: $\chi=\chi_o$ and $\chi\neq\chi_o$.

First, from Equation \eqref{eq:seriesX0} and \eqref{eq:zeta}, writing $u=q^{-s}$ as always, we get \[L^*(u,\chi_o) = \prod_{P|h}\left(1-u^{\deg(P)}\right)\frac{1}{1-qu}.\] Now, consider the inverse of the last equation. We take the logarithmic derivative and apply its additive property, we use Equation \eqref{eq:log} and get
\begin{align*}
    u\frac{d}{du}\log(L^*(u,\chi_o)) &= -u\frac{d}{du}\log(L^*(u,\chi_o)^{-1}) \\
    &=\sum_{m=1}^\infty q^mu^m - \sum_{m=1}^\infty\sum_{P|h}u^{m\deg(P)}\\
    &=\sum_{m=1}^\infty\left(q^mu^m-\sum_{P|h}u^{m\deg(P)}\right).
\end{align*}

Therefore, since in the worst case scenario for some $m$ the inner sum contributes with $-Hu^m$, comparing with the definition of $c_m(\chi)$, it is now obvious that 
\begin{equation}\label{eq:cn0}
   |c_m(\chi_o)-q^m|\leq H. 
\end{equation}

Instead, for $\chi\neq\chi_o$, take the logarithmic derivative of the inverse of $L^*(u,\chi)$ using Equation \eqref{eq:L1}, multiply by $u$, and combine the resulting equation with Equation \eqref{eq:log}. We have 

\[c_m(\chi) = -\sum_{k=1}^{H-1}\alpha_k(\chi)^m.\]

From the analogue of the Riemann hypothesis for function fields over a finite field, it follows that each of the $\alpha_k(\chi)$ has absolute value either 1 or $\sqrt{q}$ (see \cite{weil1948courbes}). Therefore
\begin{equation}\label{eq:cngeneric}
    |c_m(\chi)|\leq (H-1)q^{\frac{m}{2}}.
\end{equation}
Consider now the Euler product expansion of $L^*(u,\chi)$ given by Equation \eqref{eq:L2}. Take the logarithmic derivative, multiply both sides of the resulting equation by $u$ and again, using Equation \eqref{eq:log}, we get \[c_m(\chi)=\sum_{\substack{k,P \\ k\deg(P)=m}}\deg(P)\chi(P)^k\] that can be rearranged by separating the term corresponding to $k=1$ as outlined below.
\[c_m(\chi) = m\sum_{\deg(P)=m}\chi(P) + \sum_{\substack{d|m \\ d\leq m/2}}d\sum_{\deg(P)=d}\chi(P)^{m/d}.\]
Since the value of a character is either zero or a root of unity, the absolute value of the inner sum on the right is less than or equal to the number of monic irreducible polynomials of degree $d$. Thus, using Theorem \ref{appx:cardinalitySm} we have
\begin{equation}\label{eq:cmpre}
    \left| c_m(\chi)-m\sum_{\deg(P)=m}\chi(P)\right| \leq \sum_{\substack{d|m \\ d\leq m/2}}d\left( \frac{q^d}{d}+2\frac{q^\frac{d}{2}}{d}\right)\leq 3q^{\frac{m}{2}}T(m)
\end{equation}
where $T(m)$ is the number of divisors of $m$.

To conclude the proof, we need now to compute the expression $\sum_{\chi}\bar{\chi}(a)c_m(\chi)$ in two ways. Remember that since $a\in \left(\faktor{A}{hA}\right)^\times$, we have that $\gcd(a,h)=1$.  We have

\begin{equation*} 
    \begin{split}
        &\left| \frac{1}{\Phi(h)}\sum_\chi\bar{\chi}(a)c_m(\chi)-m\#S_m(a,h) \right|= \\
        & =\left|\frac{1}{\Phi(h)}\sum_\chi\bar{\chi}(a)\left( c_m(\chi)-m\sum_{\deg(P)=m}\chi(P) + m\sum_{\deg(P)=m}\chi(P) \right) -m\#S_m(a,h)\right|
		\\        
        &\leq U+V
    \end{split}
\end{equation*}
where \[U=\left|\frac{1}{\Phi(h)}\sum_\chi\bar{\chi}(a)\left( c_m(\chi)-m\sum_{\deg(P)=m}\chi(P)\right)\right|\]
and 
\[V=\left| \frac{m}{\Phi(h)}\sum_{\deg(P)=m}\sum_\chi\bar{\chi}(a)\chi(P)  -m\#S_m(a,h)\right|.\]

Finally, using Equation  \eqref{eq:cmpre} for $U$, and the orthogonality relations from Proposition \ref{appx:orthogonalityr} for $V$, we can upper bound $U+V$ as follows:
\begin{equation*}
\begin{split}
        &U+V\leq \frac{3q^{\frac{m}{2}}}{\Phi(h)}T(m)\sum_\chi\left|\bar{\chi}(a)\right| + \left|m\sum_{\deg(P)=m}\delta(a,P)-m\#S_m(a,h)\right|\\
        &=  \frac{3q^{\frac{m}{2}}}{\Phi(h)}T(m)\Phi(h) = 3q^{\frac{m}{2}}T(m)
    \end{split}
\end{equation*}
namely
\begin{equation}\label{final1}
    \left| \sum_\chi\bar{\chi}(a)c_m(\chi)-m\Phi(h)\#S_m(a,h) \right|\leq 3q^{\frac{m}{2}}T(m)\Phi(h).
\end{equation}

Next, using Equations \eqref{eq:cn0} and \eqref{eq:cngeneric}, that $|\chi(a)|=1$ and $\chi_o(a)=1$ (since $\gcd(a,h)=1$), and the triangle inequality we get

\begin{align}\label{final2}
    \begin{split}
        \left| \sum_\chi\bar{\chi}(a)c_m(\chi)-q^m \right| & = \left|\bar{\chi_o}(a)c_m(\chi_o)-q^m + \sum_{\chi\neq\chi_o}\bar{\chi}(a)c_m(\chi) \right| \\
        &\leq H+ \sum_{\chi\neq\chi_o}(H-1)q^{\frac{m}{2}}\leq 2Hq^{\frac{m}{2}}\Phi(h).
    \end{split}
\end{align}

Finally, using the last two results \eqref{final1} and \eqref{final2} we have
\begin{equation*}
    \begin{split}
        \left|m\Phi(h)\#S_m(a,h)-q^m \right| =&\left|m\Phi(h)\#S_m(a,h)-\sum_\chi\bar{\chi}(a)c_m(\chi)+\sum_\chi\bar{\chi}(a)c_m(\chi)-q^m \right| \\
        \leq & \left| \sum_\chi\bar{\chi}(a)c_m(\chi)-m\Phi(h)\#S_m(a,h) \right|+\left| \sum_\chi\bar{\chi}(a)c_m(\chi)-q^m \right| \\
        \leq & 3q^{\frac{m}{2}}T(m)\Phi(h) + 2\deg(h)q^{\frac{m}{2}}\Phi(h).
    \end{split}
\end{equation*}
Dividing by $m\Phi(h)$ concludes the proof. 
\end{proof}

\subsection{Existence of an infinite family}

Finally, we are able to employ Dirichlet Theorem to guarantee the existence of an infinite family of our codes. First, we need a preliminary lemma.

\begin{lemma}\label{prop:selectprime}
Set $H=\prod^{\ell}_{i=1} (T^R-a_i)$.
Let $m\geq 4$ be an integer such that \[\ell R<\frac{m}{2}-\frac{\log(m)}{\log(q)}-\frac{\log\left(5 \right)}{\log(q)}.\] Then there exists a monic irreducible polynomial $\mathfrak q \in \vF_q[T]$ of degree $m$ such that $\mathfrak q \equiv 1\pmod h$.
\end{lemma}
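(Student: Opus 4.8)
The plan is to apply Theorem \ref{appx:mainTheo} with $a=1$, since $\gcd(1,h)=1$, and show that the resulting lower bound on $\#S_m(1,h)$ is strictly positive. By Theorem \ref{appx:mainTheo} we have
\[
\#S_m(1,h)\geq \frac{q^m}{m\Phi(h)}-\bigl(3T(m)+2\deg(h)\bigr)\frac{q^{m/2}}{m},
\]
so it suffices to prove
\[
\frac{q^m}{\Phi(h)}>\bigl(3T(m)+2\deg(h)\bigr)q^{m/2}.
\]
First I would record the relevant crude bounds: $\Phi(h)\leq |h|=q^{\deg(h)}=q^{\ell R}$ (since $\deg(h)=\ell R$ by construction), $\deg(h)=\ell R$, the divisor bound $T(m)\leq m$, and hence $3T(m)+2\deg(h)\leq 3m+2\ell R\leq 5m$ using $\ell R\leq m$ (which follows from the hypothesis, as $\ell R<m/2<m$). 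Substituting, it is enough to show $q^m/q^{\ell R}>5m\,q^{m/2}$, i.e.
\[
q^{m/2-\ell R}>5m,
\]
which upon taking $\log$ is exactly $\ell R<\tfrac{m}{2}-\tfrac{\log m}{\log q}-\tfrac{\log 5}{\log q}$, the stated hypothesis. This forces $\#S_m(1,h)\geq 1$, giving the desired $P$.

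The steps in order: (1) invoke Theorem \ref{appx:mainTheo} with $a=1$; (2) bound $\Phi(h)\leq q^{\ell R}$ and $T(m)\leq m$; (3) bound $3T(m)+2\deg(h)\leq 5m$ via $\ell R\leq m$; (4) reduce positivity of the count to $q^{m/2-\ell R}>5m$; (5) check this is equivalent to the hypothesis by taking logarithms; (6) conclude $\#S_m(1,h)\geq 1$. I would remark that $m\geq 4$ is used only to ensure the right-hand side of the hypothesis can be positive (so the statement is non-vacuous) and that all logarithms are taken to a consistent base.

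I do not expect a genuine obstacle here: the argument is essentially bookkeeping with the error term from Theorem \ref{appx:mainTheo}. The only mild subtlety is making sure the divisor-bound estimate $3T(m)+2\ell R\leq 5m$ is valid, which needs $\ell R\leq m$; this is implied by the hypothesis since $\tfrac m2-\tfrac{\log m}{\log q}-\tfrac{\log 5}{\log q}<\tfrac m2\leq m$. One could instead use the sharper $T(m)=O(m^{\varepsilon})$, but that would only complicate the constant without improving the parameter range meaningfully, so the crude bound is the right choice for a clean statement.
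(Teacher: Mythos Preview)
Your proof is correct and follows essentially the same route as the paper: apply Theorem~\ref{appx:mainTheo} with $a=1$, bound $\Phi(h)\leq q^{\ell R}$, crudely estimate the error term, and reduce to the logarithmic inequality in the hypothesis. The only cosmetic difference is that the paper uses the divisor bound $T(m)\leq 2\sqrt{m}$ (and then invokes $m\geq 4$ to get $6/\sqrt{m}+2\leq 5$), whereas you use the cruder $T(m)\leq m$ together with $\ell R\leq m$ to reach the same constant $5$ directly.
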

\begin{proof}
    For $a=1$ and  in Theorem \ref{thm:Dirichlet} we know that \[ \left| \#S_m(1,h)-\frac{q^m}{m\Phi(h)}\right| \leq 3\frac{q^{\frac{m}{2}}}{m}T(m)+2\deg(h)\frac{q^{\frac{m}{2}}}{m}.\]

    Since we can bound $T(m)\leq2\sqrt{m}$ and $\deg(h)=\ell R\leq m$,  we can write 
    \[  \#S_m(1,h)\geq \frac{q^m}{m\Phi(h)} - 6\frac{q^{\frac{m}{2}}}{\sqrt{m}}-2q^{\frac{m}{2}}.\]
    
Bounding now $\Phi(h)$ simply with $q^{\ell R}$ and observing that we want to estimate $m$ for which $\#S_m(1,h)$ is at least $1$, it is sufficient to check when \[0<\frac{q^m}{mq^{\ell R}}-6\frac{q^{\frac{m}{2}}}{\sqrt{m}}-2q^{\frac{m}{2}}.\]
This gives 
    \[\left(\frac{6}{\sqrt{m}}+2\right)< \frac{q^{\frac{m}{2}}}{mq^{\ell R}}.\]
    Taking the logarithm on both sides concludes the proof \[\ell R<\frac{m}{2}-\frac{\log(m)}{\log(q)}-\frac{\log\left( 6/\sqrt{m}+2 \right)}{\log(q)},\]
    since this condition is verified by hypothesis for $m\geq 4$.
\end{proof}

Finally, we are ready to prove the existence of an infinite family of our codes.

\begin{theorem}\label{thm:Existence}
Let $q$ be a prime power, and let $m\geq 4$ be an integer. Let $R,\ell$ be positive integers such that \[\ell R<\frac{m}{2}-\frac{\log(m)}{\log(q)}-\frac{\log\left( 5 \right)}{\log(q)},\]
$2\leq \delta< R$, $s+1\leq \ell< q$. Then, there exists a  rank-metric code in $\vF_q^{m\times \ell R}$, rank-locality $(r,\delta)$, $R=r+\delta-1$, and $\vF_q$-dimension equal to $m(s+1)r$. 
Moreover, this code has rank distance $\ell R-Rs-r+1$, thus it is optimal with respect to the bound in Proposition \ref{prop: singletonLikeRankMetric}.
\end{theorem}
\begin{proof}
Employ the construction of subsection \ref{sec: rank 1} observing that with the hypothesis of the Theorem the parameters verify conditions C0, C1, C2 (and therefore (1) of subsection \ref{sec: rank 1}).
Now, since we have
\[\ell R<\frac{m}{2}-\frac{\log(m)}{\log(q)}-\frac{\log\left( 5 \right)}{\log(q)},\]
Lemma \ref{prop:selectprime} guarantees the existence of an irreducible polynomial $\mathfrak q$ of degree $m$ that is congruent to $1$ modulo $H$, which verifies (3) in the construction of subsection \ref{sec: rank 1}. 
This allows for the construction of subsection \ref{sec: rank 1} to happen with $g=1$, i.e. for the rank $1$ Drinfeld module $T\mapsto \alpha+\tau$, with $\alpha$ root of $\mathfrak q$ in $\vF_{q^m}$. 

 Now derive the wanted parameters from Theorem \ref{thm:GeneralMainThm}.
\end{proof}

\bibliography{ref.bib}

@book{Rosen2002,
  title = {Number Theory in Function Fields},
  ISBN = {9781475760460},
  ISSN = {0072-5285},
  url = {http://dx.doi.org/10.1007/978-1-4757-6046-0},
  DOI = {10.1007/978-1-4757-6046-0},
  journal = {Graduate Texts in Mathematics},
  publisher = {Springer New York},
  author = {Rosen,  Michael},
  year = {2002}
}

@article{bartz2022rank,
  title={Rank-metric codes and their applications},
  author={Bartz, Hannes and Holzbaur, Lukas and Liu, Hedongliang and Puchinger, Sven and Renner, Julian and Wachter-Zeh, Antonia and others},
  journal={Foundations and Trends{\textregistered} in Communications and Information Theory},
  volume={19},
  number={3},
  pages={390--546},
  year={2022},
  publisher={Now Publishers, Inc.}
}

@inproceedings{samardjiska2019reaction,
  title={A reaction attack against cryptosystems based on LRPC codes},
  author={Samardjiska, Simona and Santini, Paolo and Persichetti, Edoardo and Banegas, Gustavo},
  booktitle={Progress in Cryptology--LATINCRYPT 2019: 6th International Conference on Cryptology and Information Security in Latin America, Santiago de Chile, Chile, October 2--4, 2019, Proceedings 6},
  pages={197--216},
  year={2019},
  organization={Springer}
}

@article{silva2008rank,
  title={A rank-metric approach to error control in random network coding},
  author={Silva, Danilo and Kschischang, Frank R and Koetter, Ralf},
  journal={IEEE transactions on information theory},
  volume={54},
  number={9},
  pages={3951--3967},
  year={2008},
  publisher={IEEE}
}

@inproceedings{gaborit2014new,
  title={New results for rank-based cryptography},
  author={Gaborit, Philippe and Ruatta, Olivier and Schrek, Julien and Z{\'e}mor, Gilles},
  booktitle={Progress in Cryptology--AFRICACRYPT 2014: 7th International Conference on Cryptology in Africa, Marrakesh, Morocco, May 28-30, 2014. Proceedings 7},
  pages={1--12},
  year={2014},
  organization={Springer}
}

@article{couvreur2020hardness,
  title={On the hardness of code equivalence problems in rank metric},
  author={Couvreur, Alain and Debris-Alazard, Thomas and Gaborit, Philippe},
  journal={arXiv preprint arXiv:2011.04611},
  year={2020}
}

@article{alfarano2022linear,
  title={Linear cutting blocking sets and minimal codes in the rank metric},
  author={Alfarano, Gianira N and Borello, Martino and Neri, Alessandro and Ravagnani, Alberto},
  journal={Journal of Combinatorial Theory, Series A},
  volume={192},
  pages={105658},
  year={2022},
  publisher={Elsevier}
}

@article{neri2022twisted,
  title={Twisted linearized Reed-Solomon codes: A skew polynomial framework},
  author={Neri, Alessandro},
  journal={Journal of Algebra},
  volume={609},
  pages={792--839},
  year={2022},
  publisher={Elsevier}
}

@incollection{gorla2021rank,
  title={Rank-metric codes},
  author={Gorla, Elisa},
  booktitle={Concise Encyclopedia of Coding Theory},
  pages={227--250},
  year={2021},
  publisher={Chapman and Hall/CRC}
}

@inproceedings{silberstein2013optimal,
  title={Optimal locally repairable codes via rank-metric codes},
  author={Silberstein, Natalia and Rawat, Ankit Singh and Koyluoglu, O Ozan and Vishwanath, Sriram},
  booktitle={2013 IEEE International Symposium on Information Theory},
  pages={1819--1823},
  year={2013},
  organization={IEEE}
}

@article{tamo2014family,
  title={A family of optimal locally recoverable codes},
  author={Tamo, Itzhak and Barg, Alexander},
  journal={IEEE Transactions on Information Theory},
  volume={60},
  number={8},
  pages={4661--4676},
  year={2014},
  publisher={IEEE}
}

@article{barg2017locally,
  title={Locally recoverable codes on algebraic curves},
  author={Barg, Alexander and Tamo, Itzhak and Vl{\u{a}}du{\c{t}}, Serge},
  journal={IEEE Transactions on Information Theory},
  volume={63},
  number={8},
  pages={4928--4939},
  year={2017},
  publisher={IEEE}
}

@article{tamo2016bounds,
  title={Bounds on the parameters of locally recoverable codes},
  author={Tamo, Itzhak and Barg, Alexander and Frolov, Alexey},
  journal={IEEE Transactions on information theory},
  volume={62},
  number={6},
  pages={3070--3083},
  year={2016},
  publisher={IEEE}
}

@incollection{barg2017locally2,
  title={Locally recoverable codes from algebraic curves and surfaces},
  author={Barg, Alexander and Haymaker, Kathryn and Howe, Everett W and Matthews, Gretchen L and V{\'a}rilly-Alvarado, Anthony},
  booktitle={Algebraic Geometry for Coding Theory and Cryptography: IPAM, Los Angeles, CA, February 2016},
  pages={95--127},
  year={2017},
  publisher={Springer}
}

@article{micheli2019constructions,
  title={Constructions of locally recoverable codes which are optimal},
  author={Micheli, Giacomo},
  journal={IEEE transactions on information theory},
  volume={66},
  number={1},
  pages={167--175},
  year={2019},
  publisher={IEEE}
}

@article{dukes2023optimal,
  title={Optimal locally recoverable codes with hierarchy from nested F-adic expansions},
  author={Dukes, Austin and Micheli, Giacomo and Lavorante, Vincenzo Pallozzi},
  journal={IEEE Transactions on Information Theory},
  year={2023},
  publisher={IEEE}
}

@inproceedings{garrison2023class,
  title={On a Class of Optimal Locally Recoverable Codes with Availability},
  author={Garrison, Clifton and Micheli, Giacomo and Nott, Logan and Lavorante, Vincenzo Pallozzi and Waitkevich, Phillip},
  booktitle={2023 IEEE International Symposium on Information Theory (ISIT)},
  pages={2021--2026},
  year={2023},
  organization={IEEE}
}

@article{liu2020constructions,
  title={Constructions of optimal locally recoverable codes via Dickson polynomials},
  author={Liu, Jian and Mesnager, Sihem and Tang, Deng},
  journal={Designs, Codes and Cryptography},
  volume={88},
  pages={1759--1780},
  year={2020},
  publisher={Springer}
}

@inproceedings{mesnager2019good,
  title={On good polynomials over finite fields for optimal locally recoverable codes},
  author={Mesnager, Sihem},
  booktitle={Codes, Cryptology and Information Security: Third International Conference, C2SI 2019, Rabat, Morocco, April 22--24, 2019, Proceedings-In Honor of Said El Hajji 3},
  pages={257--268},
  year={2019},
  organization={Springer}
}

@article{chen2022function,
  title={A function field approach toward good polynomials for further results on optimal LRC codes},
  author={Chen, Ruikai and Mesnager, Sihem},
  journal={Finite Fields and Their Applications},
  volume={81},
  pages={102028},
  year={2022},
  publisher={Elsevier}
}

@book{Papikian2023,
  title = {Drinfeld Modules},
  ISBN = {9783031197079},
  ISSN = {2197-5612},
  url = {http://dx.doi.org/10.1007/978-3-031-19707-9},
  DOI = {10.1007/978-3-031-19707-9},
  journal = {Graduate Texts in Mathematics},
  publisher = {Springer International Publishing},
  author = {Papikian,  Mihran},
  year = {2023}
}

@inproceedings{rankmetriccodes,
  title = {Rank-metric codes with local recoverability},
  url = {http://dx.doi.org/10.1109/ALLERTON.2016.7852348},
  DOI = {10.1109/allerton.2016.7852348},
  booktitle = {2016 54th Annual Allerton Conference on Communication,  Control,  and Computing (Allerton)},
  publisher = {IEEE},
  author = {Kadhe,  Swanand and El Rouayheb,  Salim and Duursma,  Iwan and Sprintson,  Alex},
  year = {2016},
  month = sep 
}

@article{weil1948courbes,
  title={Sur les courbes alg{\'e}briques et les vari{\'e}t{\'e}s qui s’ en d{\'e}duisent, no. 1041},
  author={Andr{\'e}, WEIL},
  journal={Actualit{\'e}s Sci. Ind},
  year={1948}
}

@book{huffman2021concise,
  title={Concise encyclopedia of coding theory},
  author={Huffman, W Cary and Kim, Jon-Lark and Sol{\'e}, Patrick},
  year={2021},
  publisher={Chapman and Hall/CRC}
}

@article {Gabidulin,
    AUTHOR = {Gabidulin, \`E. M.},
     TITLE = {Theory of codes with maximum rank distance},
   JOURNAL = {Problemy Peredachi Informatsii},
  FJOURNAL = {Akademiya Nauk SSSR. Institut Problem Peredachi Informatsii
              Akademii Nauk SSSR. Problemy Peredachi Informatsii},
    VOLUME = {21},
      YEAR = {1985},
    NUMBER = {1},
     PAGES = {3--16},
      ISSN = {0555-2923},
   MRCLASS = {94B05},
  MRNUMBER = {791529},
}
\bibliographystyle{abbrv}

\end{document}